\documentclass[runningheads]{llncs2}

\def\techreport{}

\usepackage[colorlinks=true]{hyperref}
\usepackage{graphicx}
\usepackage{subfigure}
\usepackage{amsmath,amssymb}
\usepackage{color}
\usepackage{multirow}
\usepackage{wrapfig}
\usepackage{ifthen}
\usepackage{float}
\usepackage{enumitem}
\usepackage{listings}
\setlist[itemize]{noitemsep, topsep=3pt}

\newcommand{\appref}[1]{Appendix~\ref{#1}}
\newcommand{\sectref}[1]{Section~\ref{#1}}

\newcommand{\figref}[1]{Fig.~\ref{#1}}

\newcommand{\eqnref}[1]{(\ref{#1})}
\newcommand{\thmref}[1]{Theorem~\ref{#1}}
\newcommand{\propref}[1]{Proposition~\ref{#1}}
\newcommand{\lemref}[1]{Lemma~\ref{#1}}
\newcommand{\defref}[1]{Definition~\ref{#1}}

\newcommand{\assumref}[1]{Assumption~\ref{#1}}

\newcommand{\figfigref}[2]{Figures~\ref{#1} and \ref{#2}}

\newtheorem{assumption}{Assumption}

\newcounter{exampcount}
\setcounter{exampcount}{0}

\newcommand{\startpara}[1]{{%
\vskip6pt\noindent
{\bf #1.}}}

\def\ra{{\rightarrow}}

\def\cX{{\mathcal{X}}}

\def\Aset{\mathbb{A}}

\def\Nset{\mathbb{N}}
\def\Eset{\mathbb{E}}
\def\Pset{\mathbb{P}}

\def\Rset{\mathbb{R}}

\def\Tset{\mathbb{T}}

\def\ra{\rightarrow} %
\def\rmdef{\,\stackrel{\mbox{\rm {\tiny def}}}{=}}
\newcommand{\sem}[1]{ [ \! [ {#1}  ]  \! ]} %

\renewcommand{\leq}{\leqslant}
\renewcommand{\geq}{\geqslant}

\def\squareforqed{\hbox{\rlap{$\sqcap$}$\sqcup$}}
\def\qed{\ifmmode\squareforqed\else{\unskip\nobreak\hfil
\penalty50\hskip1em\null\nobreak\hfil\squareforqed
\parfillskip=0pt\finalhyphendemerits=0\endgraf}\fi}

\newcommand\game{{\sf G}}
\newcommand\pta{{\sf P}}
\newcommand\ptg{\mathsf{P}}

\newcommand\sinit{{\bar{s}}}

\newcommand\act{\mathit{A}}
\newcommand\Act{\act}

\newcommand\dist{{\mathit{Dist}}}

\newcommand\Prob{{\mathit{Prob}}}

\newcommand{\last}{\mathit{last}}
\newcommand{\ipaths}{\mathit{IPaths}}
\newcommand{\fpaths}{\mathit{FPaths}}

\newcommand{\coalition}[1]{\langle \! \langle {#1} \rangle \! \rangle}

\def\sat{{\,\models\,}}

\def\Act{{\mathit{Act}}}

\def\future{{{\mathtt F}\,}}

\newcommand{\ap}{\mathsf{a}}

\newcommand\probopP{{\mathtt P}}
\newcommand\probop[2]{\probopP_{#1}[\,{#2}\,]}

\newcommand{\ptgtuple}{( \Pi, \loc , \linit, \clocks, \Act , \langle \loc_i \rangle_{i \in \Pi}, \inv, \enb, \probt,\ptgrew)}

\newcommand{\clocks}{\mathcal{X}}

\newcommand{\npta}{{\sem{\pta}_\Nset}}
\newcommand{\tpta}{{\sem{\pta}_\Tset}}
\newcommand{\rptg}{{\sem{\ptg}_\Rset}}
\newcommand{\nptg}{{\sem{\ptg}_\Nset}}
\newcommand{\tptg}{{\sem{\ptg}_\Tset}}
\newcommand{\rptgc}{{\sem{\ptg}_\Rset^C}}
\newcommand{\nptgc}{{\sem{\ptg}_\Nset^C}}
\newcommand{\tptgc}{{\sem{\ptg}_\Tset^C}}

\newcommand{\loc}{\mathit{L}}
\newcommand{\probt}{\mathit{prob}}

\newcommand{\linit}{\overline{l}}
\newcommand{\inv}{\mathit{inv}}
\newcommand{\enb}{\mathit{enab}}

\newcommand{\digi}[2]{[ #1 ]_{#2}}
\newcommand{\dur}[1]{\mathit{dur}(#1)}

\newcommand{\CC}[1]{\mathit{CC}({#1})}

\newcommand{\ptarew}{{r}}
\newcommand{\ptgrew}{{r}}

\newcommand{\arew}{\ptarew_{\Act}} %
\newcommand{\lrew}{\ptarew_{L}} %

\usepackage{pgfplots}
\pgfplotsset{compat=1.13}

\usepackage{scalefnt}
\definecolor{blue}{rgb}{0,0,1}

\begin{document}

\title{Verification and Control of Turn-Based Probabilistic Real-Time Games}

\ifthenelse{\isundefined{\techreport}}{%
\author{Marta Kwiatkowska\inst{1}\orcidID{0000-0001-9022-7599} \and Gethin Norman\inst{2}\orcidID{0000-0001-9326-4344} \and David~Parker\inst{3}\orcidID{0000-0003-4137-8862}}
}{%
\author{Marta Kwiatkowska\inst{1} \and Gethin Norman\inst{2} \and David~Parker\inst{3}}
}

\institute{Department of Computing Science, University of Oxford, UK
\and
School of Computing Science, University of Glasgow, UK
\and 
School of Computer Science, University of Birmingham, UK}

\maketitle

\begin{abstract}
Quantitative verification techniques have been developed for the formal analysis
of a variety of probabilistic models, such as Markov chains, Markov decision process and their variants.
They can be used to produce guarantees on quantitative aspects of system behaviour,
for example safety, reliability and performance,
or to help synthesise controllers that ensure such guarantees are met.
We propose the model of turn-based probabilistic timed multi-player games,
which incorporates probabilistic choice, real-time clocks
and nondeterministic behaviour across multiple players.
Building on the digital clocks approach for the simpler model of probabilistic timed automata, we show how to compute the key measures that underlie quantitative verification,
namely the probability and expected cumulative price to reach a target.
We illustrate this on case studies from computer security and task scheduling.
\end{abstract}

\section{Introduction}

Probability is a crucial tool for modelling computerised systems.
We can use it to model uncertainty, for example
in the operating environment of an autonomous vehicle or a wireless sensor network,
and we can reason about systems that use randomisation,
from probabilistic routing in anonymity network protocols %
to symmetry breaking in communication protocols.

Formal verification of such systems can provide us with
rigorous guarantees on, for example,
the performance and reliability of computer networks~\cite{BHHK10},
the amount of inadvertent information leakage by a security protocol~\cite{ACPS12},
or the safety level of an airbag control system~\cite{AFG+09}.
To do so requires us to model and reason about a range of 
quantitative aspects of a system's behaviour:
probability, time, resource usage, and many others.

Quantitative verification techniques have been developed
for a wide range of probabilistic models.
The simplest are Markov chains,
which model the evolution of a stochastic system over discrete time.
Markov decision processes (MDPs) %
additionally include nondeterminism, which can be used either
to model the uncontrollable behaviour of an adversary or
to determine an optimal strategy (or policy) for controlling the system.
Generalising this model further still,
we can add a notion of real time,
yielding the model of probabilistic timed automata (PTAs).
This is done in the same way as for the widely used model of timed automata (TAs),
adding real-valued variables called clocks to the model.
Tools such as PRISM~\cite{KNP11} and Storm~\cite{DJKV17} support
verification of a wide range of properties of these different probabilistic models.

Another dimension that we can add to these models and verification techniques
is \emph{game-theoretic} aspects.
These can be used to represent, for example,
the interaction between an attacker and a defender in a computer security protocol~\cite{ACKP18}, between a controller and its environment, or
between participants in a communication protocol who have opposing goals~\cite{vdHW03}.
Stochastic multi-player games %
include choices made by multiple players
who can either collaborate or compete to achieve their goals.
Tool support for verification of stochastic games,
e.g.~PRISM-games~\cite{KPW18},
has also been developed and deployed successfully in a variety of application domains.

In this paper, we consider a modelling formalism that captures all these aspects:
probability, nondeterminism, time and multiple players.
We define a model called
\emph{turn-based probabilistic timed multi-player games} (TPTGs),
which can be seen as either an extension of PTAs to incorporate multiple players,
or a generalisation of stochastic multi-player games to include time.
Building on known techniques for the simpler classes of models,
we show how to compute key properties of these models,
namely probabilistic reachability
(the probability of reaching a set of target states)
and expected price reachability
(the expected price accumulated before reaching a set of target states).

Existing techniques for PTAs largely fall into two classes:
\emph{zone-based} and \emph{digital clocks},
both of which construct and analyse a finite-state abstraction of the model.
Zones are symbolic expressions representing sets of clock values.
Zone-based approaches for analysing PTAs were first introduced in~\cite{KNSS02,KNSW07,KNP09c} and recent work extended them to the analysis of expected time~\cite{JKNQ17} and expected price reachability~\cite{KNP17b}.
The digital clocks approach works by mapping real-valued clocks to integer-valued ones,
reducing the problem of solving a PTA to solving a (discrete-time) MDP.
This approach was developed for PTAs in~\cite{KNPS06} and recently extended to the analysis of partially observable PTAs in~\cite{NPZ17}. 

In this paper, we show how a similar idea can be used to reduce the verification problem for
TPTGs to an equivalent one over (discrete-time) stochastic games.
More precisely, for the latter, we use \emph{turn-based stochastic games} (TSGs).
We first present the model of TPTGs
and give two alternative semantics:
one using real-valued clocks and the other using (integer-valued) digital clocks.
Then, we prove the  correspondence between these two semantics.
Next, we demonstrate the application of this approach to two case studies
from the domains of computer security and task scheduling.
Using a translation from TPTGs to TSGs and the model checking tool PRISM-games,
we show that a variety of useful properties can be studied on these case studies.
\ifthenelse{\isundefined{\techreport}}{%
An extended version of this paper, with complete proofs, is available~\cite{extended}.
}{%
}

\startpara{Related Work}
Timed games were introduced and shown to be decidable in~\cite{MPS95,AMPS98,dAFH+03}. These games have since been extensively studied; we mention~\cite{TK99,CDF+05}, where efficient algorithms are investigated, and~\cite{BG14}, which concerns the synthesis of strategies that are robust to stochastic perturbation in clock values. Also related is the tool UPPAAL TIGA~\cite{BCD+07}, which allows the automated analysis of reachability and safety problems for timed games.

Priced (or weighted) timed games were introduced in~\cite{LMM02,AMM04,BCFL04}, which extend timed games by associating integer costs with locations and transitions, and optimal cost reachability was shown to be decidable under certain restrictions. The problem has since been shown to be undecidable for games with three or more clocks~\cite{BBR05,BBM06}. Priced timed games have recently been extended to allow partial observability~\cite{CDL+07} and to energy games~\cite{BMR+2018}. 

Two-player (concurrent) probabilistic timed games were introduced in~\cite{FKNT16}. The authors demonstrated that such games are not determined (even when all clock constraints are closed) and investigated the complexity of expected time reachability for such games. Stochastic timed games~\cite{BF09,BKK+10} are turn-based games where time delays are exponentially distributed. A similar model, based on interactive Markov chains~\cite{Her02}, is considered in \cite{BHK+12}.

\section{Background}

We start with some background and notation on \emph{turn-based stochastic games} (TSGs). For a set $X$, let $\dist(X)$ denote the set of discrete probability distributions over $X$ and $\Rset$ the set of non-negative real numbers.
\begin{definition}[Turn-based stochastic multi-player game]
A turn-based stochastic multi-player game (TSG) is a tuple $\game{=}(\Pi,S,\sinit,\act,\langle S_i \rangle_{i \in \Pi},\delta,R)$ where $\Pi$ is a finite set of players, $S$ is a (possibly infinite) set of states, $\sinit \in S$ is an initial state, $\act$ is a (possibly infinite) set of actions, $\langle S_i \rangle_{i \in \Pi}$ is a partition of the state space, $\delta : S{\times}\act \ra \dist(S)$ is a (partial) transition function and $R : S {\times} \act \ra \Rset$ is a price (or reward) function. 
\end{definition}
The transition function is partial in the sense that $\delta$ need not be defined for all state-action pairs. For each state $s$ of a TSG $\game$, there is a set of available actions given by $A(s) \rmdef \{ a \in \act \mid \delta(s,a) \; \mbox{is defined} \}$. The choice of which available action is taken in state $s$ is under the control of a single player: the player $i$ such that $s \in S_i$. If player $i$ selects action $a \in \act(s)$ in $s$, then the probability of transitioning to state $s'$ equals $\delta(s,a)(s')$ and a price of $R(s,a)$ is accumulated.

\startpara{Paths and strategies}
A path of a TSG $\game$ is a sequence $\pi = s_0 \xrightarrow{a_0} s_1 \xrightarrow{a_1} \cdots$ such that $s_i \in S$, $a_i \in A(s_i)$ and $\delta(s_i,a_i)(s_{i+1}){>}0$ for all $i {\geq} 0$. For a path $\pi$, we denote by $\pi(i)$ the $(i{+}1)$th state of the path, $\pi[i]$ the action associated with the $(i{+}1)$th transition and, if $\pi$ is finite, $\last(\pi)$ the final state. The length of a path $\pi$, denoted $|\pi|$, equals the number of transitions. For a path $\pi$ and $k {<} |\pi|$, let $\pi^{(k)}$ be the $k$th prefix of $\pi$. Let $\fpaths_\game$ and $\ipaths_\game$ equal the sets of finite and infinite paths starting in the initial state $\sinit$.

A strategy for player $i \in \Pi$ is a way of resolving the choice of action in each state under the control of player $i$, based on the game's execution so far. Formally, a strategy $\sigma$ for player $i \in \Pi$ is a function $\sigma_i : \{ \pi \in \fpaths_\game \mid \last(\pi) \in S_i \} \ra \dist(\act)$ such that, if $\sigma_i(\pi)(a){>}0$, then $a \in A(\last(\pi))$. The set of all strategies of player $i \in \Pi$ is represented by $\Sigma^i_\game$ (when clear from the context we will drop the subscript $\game$). A strategy for player $i$ is deterministic if it always selects actions with probability 1, and memoryless if it makes the same choice for any paths that end in the same state. 

A strategy profile for $\game$ takes the form $\sigma {=} \langle \sigma_i \rangle_{i \in \Pi}$, listing a strategy for each player. We use $\fpaths^\sigma$ and $\ipaths^\sigma$ for the sets of finite and infinite paths corresponding to the choices made by the profile $\sigma$ when starting in the initial state. For a given profile $\sigma$, the behaviour of $\game$ is fully probabilistic and we can define a probability measure $\Prob^\sigma$ over the set of infinite paths $\ipaths^\sigma$~\cite{KSK76}. 

\startpara{Properties}
Two fundamental properties of quantitative models are the probability of reaching a set of target states and the expected price accumulated before doing so. For a strategy profile $\sigma$ and set of target states $F$ of a TSG $\game$, the probability of reaching $F$ and expected price accumulated before reaching $F$ from the initial state $\sinit$ under the profile $\sigma$ are given by the following (again, when it is clear from the context, we will drop the subscript $\game$):
\[
\begin{array}{rcl}
\Pset^{\sigma}_\game(F) & \; \rmdef \; & \Prob^\sigma (\{ \pi \in \ipaths^\sigma \mid \pi(i) \in F \; \mbox{for some} \; i \in \Nset \}) \\
\Eset^{\sigma}_\game(F) & \; \rmdef \; & \int_{\pi \in \ipaths^\sigma} \mathit{rew}(\pi,F) \: \mathrm{d}\Prob^\sigma
\end{array}
\]
where for any infinite path $\pi$:
\[ \begin{array}{rcl}
\mathit{rew}(\pi,F) & \; \rmdef \; & \sum_{i=0}^{k_F} R(\pi(i),\pi[i])
\end{array} \]
and $k_F {=} \min \{ k{-}1 \mid \pi(k) \in F \}$ if $\pi(k) \in F$ for some $k \in \Nset$ and $k_F {=} \infty$ otherwise. 

To quantify the above properties over the strategies of the players, we consider a coalition $C \subseteq \Pi$ who try to maximise the property of interest, while the remaining players $\Pi {\setminus} C$ try to minimise it. Formally, we have the following definition:
\[
\begin{array}{rcl}
\Pset^C_\game(F) & \; \rmdef \; & \sup_{\sigma_1 \in \Sigma^1} \inf_{\sigma_2 \in \Sigma^2} \Pset^{\sigma_1,\sigma_2}_{\game^C}(F) \\
\Eset^C_\game(F) & \; \rmdef \; & \sup_{\sigma_1 \in \Sigma^1} \inf_{\sigma_2 \in \Sigma^2} \Eset^{\sigma_1,\sigma_2}_{\game^C}(F)
\end{array}
\]
where $\game^C$ is the two-player game constructed from $\game$ where the states controlled by player 1 equal $\cup_{i \in C} S_i$ and the states controlled by player 2 equal $\cup_{i \in \Pi {\setminus}C} S_i$.

The above definition yields the \emph{optimal value} of $\game$ if it is \emph{determined}, i.e., if the maximum value that the coalition $C$ can ensure equals the minimum value that the coalition $\Pi \setminus C$ can ensure. Formally, the definition of determinacy and optimal strategies for probabilistic reachability properties of TSGs are given below, and the case of expected reachability is analogous (replacing $\Pset$ with $\Eset$).
\begin{definition}
For a TSG $\game$, target $F$ and coalition of players $C$, we say the game $\game^C$ is \emph{determined} with respect to probabilistic reachability if:
\[ \begin{array}{rcl}
\sup_{\sigma_1 \in \Sigma^1} \inf_{\sigma_2 \in \Sigma^2} \Pset^{\sigma_1,\sigma_2}(F)  & \; = \; & \inf_{\sigma_2 \in \Sigma^2} \sup_{\sigma_1 \in \Sigma^1} \Pset^{\sigma_1,\sigma_2}(F) \, .
\end{array} \]
Furthermore, a strategy $\sigma_1^\star \in \Sigma_1$ is \emph{optimal} if\/
$\Pset^{\sigma_1^\star,\sigma_2}(F) \geq \Pset^C_\game(F)$ for all $\sigma_2 \in \Sigma^2$
and strategy $\sigma_2^\star \in \Sigma_2$ is \emph{optimal} if\/ $\Pset^{\sigma_1,\sigma_2^\star}(F) \leq \Pset^C_\game(F)$ for all $\sigma_1 \in \Sigma^1$.
\end{definition}
As we shall demonstrate, the games we consider are determined with respect to probabilistic and expected reachability, and optimal strategies exist. In particular, finite-state and finite-branching TSGs are determined~\cite{Krc09} and efficient techniques exist to approximate optimal values and optimal strategies~\cite{Con93,FV97}.
These techniques underlie the model checking algorithms for logics such as rPATL, %
defined for TSGs and implemented in the tool PRISM-games~\cite{KPW18}.

\section{Turn-based Probabilistic Timed Multi-Player Games}\label{tptgs-sect}

We now introduce \emph{turn-based probabilistic timed multi-player games} (TPTGs), a framework for modelling systems which allows probabilistic, non-deterministic, real-time and competitive behaviour. Let $\Tset \in \{ \Rset , \Nset \}$ be the time domain of either the non-negative reals or natural numbers.

\startpara{Clocks, valuations and clock constraints} We assume a finite set of \emph{clocks} $\clocks$. A \emph{clock valuation} is a function $v : \clocks \ra \Tset$; the set of all clock valuations is denoted $\Tset^\clocks$. Let $\mathbf{0}$ be the clock valuation that assigns the value 0 to all clocks. For any set of clocks $X \subseteq \clocks$ and clock valuation $v \in \Tset^\clocks$, let $v[X{:=}0]$ be the clock valuation such that, for any clock $x$, we have $v[X{:=}0](x)$ equals $0$ if $x \in X$ and $v(x)$ otherwise. Furthermore, for any time instant $t \in \Tset$, let $v{+}t$ be the clock valuation such that $(v{+}t)(x)=v(x){+}t$ for all $x \in \clocks$. A closed, diagonal-free clock constraint\footnote{A constraint is closed if does not contain strict inequalities and diagonal-free if there are no inequalities of the form $x {-} y \sim c$ for $x,y \in \clocks$, $\sim \in \{<,\leq,\geq,>\}$ and $c \in \Nset$.} $\zeta$ is a conjunction of inequalities of the form $x{\leq}c$ or $x{\geq}c$, where $x \in \clocks$ and $c \in \Nset$. We write $v \sat \zeta$ if the clock valuation $v$ satisfies the clock constraint $\zeta$ and use $\CC{\clocks}$ for the set of all clock constraints over $\clocks$.

We are now in a position to present the syntax and semantics of TPTGs.
\begin{definition}[TPTG syntax]\label{pta}
A \emph{turn-based probabilistic timed multi-player game} (TPTG) is a tuple $\ptg {=} \ptgtuple$ where:
\begin{itemize}
\item
$\Pi$ is a finite set of {\em players};
\item
$\loc$ is a finite set of {\em locations} and $\linit\in\loc$ is an {\em initial location};
\item
$\clocks$ is a finite set of {\em clocks;}
\item
$\Act$ is a finite set of {\em actions;}
\item
$\langle \loc_i \rangle_{i \in \Pi}$ is a partition of $\loc$;
\item
$\inv: \loc \ra \CC{\clocks}$ is an {\em invariant condition;}
\item
$\enb: \loc {\times} \act \ra \CC{\clocks}$ is an {\em enabling condition;}
\item
$\probt : \loc {\times} \act \ra \dist(2^{\clocks} {\times} \loc)$ is a (partial) {\em probabilistic transition function;}
\item
$\ptgrew = (\lrew,\arew)$ is a \emph{price structure} where $\lrew : L \ra \Nset$ is a location price function and $\arew : L {\times} \Act \ra \Nset$ an action price function.
\end{itemize}
\end{definition}
As for PTAs~\cite{NPS13}, a state of a TPTG $\ptg$ is a location-clock valuation pair $(l,v)$ such that the clock valuation satisfies the invariant $\inv(l)$. The transition choice in $(l,v)$ is under the control of the player $i$ where $l \in L_i$. A transition is a time-action pair $(t,a)$ which represents letting time $t$ elapse and then performing action $a$. Time can elapse if the invariant of the current location remains continuously satisfied and action $a$ can be performed only if the enabling condition is satisfied. If action $a$ is taken in location $l$, then the probability of moving to location $l'$ and resetting the set of clocks $X$ equals $\probt(l, a)(X, l')$. TPTGs have both location prices, which are accumulated at rate $\lrew(l)$ when time passes in location $l$, and action prices, where $\arew(l,a)$ is accumulated when performing action $a$ in location $l$. Formally, the semantics of a TPTG is a TSG defined as follows.
\begin{definition}[TPTG semantics]\label{ptg-sem}
For any time domain $\Tset \in \{ \Rset,\Nset \}$ and TPTG $\ptg {=} \ptgtuple$ the semantics of $\ptg$ with respect to the time domain $\Tset$ is the TSG $\tpta{=}(\Pi,S,\sinit,\act,\langle S_i \rangle_{i \in \Pi},\delta,R)$ where:
\begin{itemize}
\item $S = \{ (l,v) \in \loc {\times} \Tset^\clocks \mid v \sat \inv(l) \}$ and\/ $\sinit = (\linit,\mathbf{0});$
\item $\act = \Tset {\times} \Act$;
\item $S_i = \{ (l,v) \in S \mid l \in \loc_i \}$ for $i \in \Pi$;
\item for any $(l,v) \in S$ and $(t,a) \in \act$ we have $\delta((l,v),(t,a))=\mu$ if and only if $v{+}t' \sat \inv(l)$ for all $0 {\leq} t' {\leq} t$, $v{+}t \sat \enb(l,a)$ and for any $(l',v') \in S$:
\[ \begin{array}{rcl}
\mu(l',v') & = & \sum\limits_{X \subseteq \cX \wedge v'{=}(v{+}t)[X{:=}0]} \probt(l,a)(X,l')
\end{array} \]
\item $R((l,v),(t,a))= t{\cdot}\lrew(l) + \arew(l,a)$ for all $(l,v) \in S$ and $(t,a) \in \act$.
\end{itemize}
\end{definition}
We follow the approach of \cite{JKNT09,FKNT16,JKNQ17} and use time-action pairs in the transition function of \defref{ptg-sem}. As explained in~\cite{JKNQ17}, this yields a more expressive semantics than having separate time and action transitions.
\begin{figure}[t]
\centering
\hspace*{0.9cm}\includegraphics[scale=0.26]{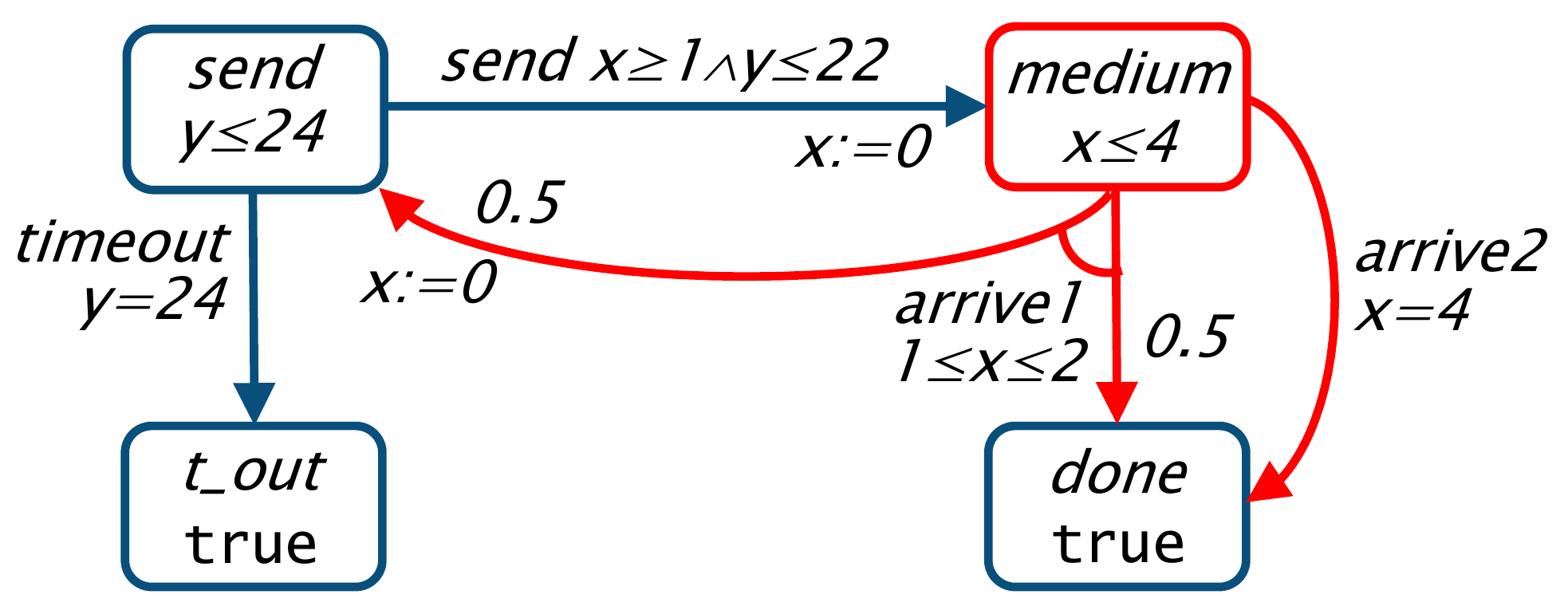}
\vspace*{-0.4cm}
\caption{An example TPTG}\label{examp-fig}
\vspace*{-0.5cm}
\end{figure}
\begin{example}
Consider the TPTG in \figref{examp-fig} which represents a simple communication protocol. There are two players: the sender and the medium, with the medium controlling the location $\mathit{medium}$ and the sender all other locations. The TPTG has two clocks: $x$ is used to keep track of the time it takes to send a message and $y$ the elapsed time. In the initial location $\mathit{send}$, the sender waits between 1 and 2 time units before sending the message. The message then passes through the medium that can either delay the message for between 1 and 2 time units after which it arrives with probability 0.5, or delay the message for 4 time units after which it arrives with probability 1. If the message does not arrive, then the sender tries to send it again until reaching a timeout after 24 time units.
\end{example}
As for PTAs, in the standard (dense-time) semantics for a TPTG the time domain $\Tset$ equals $\Rset$. This yields an infinite state model which is not amenable to verification. One approach that yields a finite state representation used in the case of PTAs is the digital clocks semantics~\cite{KNPS06}. This is based on replacing the real-valued clocks with clocks taking only values from a bounded set of integers. In order to give the definition for a TPTG $\ptg$, for any clock $x$ of $\ptg$ we define $k_x$ to be the greatest constant to which $x$ is compared in the clock constraints of $\ptg$. This allows us to use bounded clock values since, if the value of the clock $x$ exceeds $k_x$, then the exact value will not affect the satisfaction of the invariants and enabling conditions of $\ptg$, and therefore does not influence the behaviour.
\begin{definition}[Digital clocks semantics]
The digital clocks semantics of a TPTG $\ptg$, denoted $\npta$, is obtained from \defref{ptg-sem}, by setting $\Tset$ equal to $\Nset$ and for any $v \in \Nset^\clocks$,  $t \in \Nset$ and $x\in \clocks$ letting $(v{+}t)(x) = \min\{v(x){+}t, k_x{+}1 \}$.
\end{definition}
We restrict our attention to time-divergent (also called non-Zeno) behaviour. More precisely, we only consider strategies for the players that do not generate unrealisable executions, i.e., executions in which time does not advance beyond a certain point. We achieve this by restricting to TPTGs that satisfy the syntactic conditions for PTAs given in~\cite{NPS13}, derived from results on TAs~\cite{Tri99,TYB05}. In addition, we require the following assumptions to ensure the correctness of the digital clocks semantics.
\begin{assumption}\label{pta-assum}
For any TPTG $\ptg:$ (a) all invariants of $\pta$ are bounded;
(b) all clock constraints are closed and diagonal free;
(c) all probabilities are rational.
\end{assumption}
Regarding \assumref{pta-assum}(a), in fact bounded TAs are as expressive as standard TAs~\cite{Behrmann:2001}, and this result carries over to TPTGs.

To facilitate higher-level modelling, PTAs can be extended with parallel composition, discrete variables, urgent transitions and locations and resetting clocks to integer values~\cite{NPS13}. We can extend TPTGs in a similar way, and will use these constructs in \sectref{case-sect}.

\section{Correctness of the Digital Clocks Semantics}\label{correct-sect}

We now show that, under \assumref{pta-assum}, optimal probabilistic and expected price reachability values agree under the digital and dense-time semantics. As for PTAs~\cite{NPS13}, by modifying the TPTG under study, we can reduce time-bounded probabilistic reachability properties to probabilistic reachability properties and both expected time-bounded cumulative price properties and expected time-instant price properties to expected reachability properties. In each case the modifications to the TPTG preserve \assumref{pta-assum}, and therefore the digital clocks semantics can also be used to verify these classes of properties.

For the remainder of this section, we fix a TPTG $\ptg$, coalition of players $C$ and set of target locations $F \subseteq \loc$, and let $F_\Tset = \{ (l,v) \in F {\times} \Tset^\clocks \mid v \sat \inv(l) \}$ for $\Tset \in \{ \Rset,\Nset\}$. We have omitted the proofs that closely follow those for PTAs~\cite{KNPS06}. 
\ifthenelse{\isundefined{\techreport}}{%
The missing proofs can be found in~\cite{extended}. 
}{%
The missing definitions and proofs can be found in \appref{appendix}. 
}

We first present results relating to the determinacy and existence of optimal strategies for the games $\rptgc$ and $\nptgc$ and a correspondence between the strategy profiles of $\nptgc$ and $\rptgc$.
\begin{proposition}\label{determined-prop}
For any TPTG $\ptg$ satisfying \assumref{pta-assum}, the games $\rptgc$ and $\nptgc$ are determined and have optimal strategies for both probabilistic and expected price reachability properties.
\end{proposition}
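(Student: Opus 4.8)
The plan is to treat the two games separately, since they differ fundamentally in cardinality. First I would dispatch the digital game $\nptgc$ by showing it is finite-state and finite-branching. Under \assumref{pta-assum}(a) every invariant is bounded, so for each location $l$ only finitely many valuations $v \in \Nset^\clocks$ satisfy $\inv(l)$ (the capping at $k_x{+}1$ in the digital semantics also bounds each coordinate); hence the state space is finite. The same boundedness caps the admissible delays: from a state $(l,v)$ a pair $(t,a)$ can be enabled only for finitely many $t \in \Nset$, because $v{+}t'\sat\inv(l)$ must hold for all $0 \le t' \le t$ and the invariant carries an upper bound. Thus each state has finitely many available actions, and applying the cited determinacy result for finite-state, finite-branching TSGs~\cite{Krc09} immediately gives that $\nptgc$ is determined and admits optimal (indeed memoryless deterministic) strategies for both the $\Pset$ and $\Eset$ reachability objectives.

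The dense-time game $\rptgc$ is the substantive case, as it has an uncountable state space and continuous branching (delays range over $\Rset$). For determinacy I would appeal to the general theory of turn-based, i.e.\ perfect-information, zero-sum stochastic games with Borel state and action spaces. The probabilistic-reachability payoff is the indicator of the event $\{\pi \mid \pi(i) \in F_\Rset \text{ for some } i \in \Nset\}$, a countable union of cylinder sets and hence Borel, while the expected-price payoff $\rewfunc(\pi,F_\Rset)$ is a non-negative Borel-measurable function. Determinacy of such perfect-information games with Borel payoff is known (via Martin's Blackwell determinacy, respectively the Maitra--Sudderth theory of stochastic games with perfect information), so $\rptgc$ has a value for both objectives. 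The crucial point — and the reason the turn-based restriction matters, in contrast to the concurrent games of~\cite{FKNT16} — is that perfect information removes any need for the two players to randomise against one another.

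Existence of \emph{optimal} (not merely $\varepsilon$-optimal) strategies in $\rptgc$ is the main obstacle, precisely because the branching is continuous and suprema or infima over delays need not a priori be attained. I would resolve this using \assumref{pta-assum}(a,b): since invariants are bounded and closed, the set of delays $t$ admissible from any state $(l,v)$ is a closed and bounded, hence compact, subset of $\Rset$, and the one-step value is (semi)continuous in $t$, so the per-state optimum is attained; combining this with the finite region structure underlying closed, diagonal-free clock constraints lets the pointwise-optimal choices be assembled into a single optimal strategy for each player.

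Equivalently, and more in keeping with the digital-clocks theme, one can transfer the optimal strategies already obtained for the finite game $\nptgc$ to $\rptgc$ once the correspondence between the two semantics (established below) shows their optimal values coincide: an optimal strategy of $\nptgc$ induces, via the embedding of integer valuations into $\Rset^\clocks$, a strategy in $\rptgc$ that matches the optimal value against every opposing strategy, and symmetrically for the minimiser. Either route then yields determinacy together with the existence of optimal strategies, completing the argument.
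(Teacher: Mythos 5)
Your treatment of $\nptgc$ coincides with the paper's: the proof there simply observes that the digital game is finite-state and finitely branching and cites~\cite{Krc09}; you additionally spell out why \assumref{pta-assum}(a) and the capping at $k_x{+}1$ give finiteness, which is a useful (and correct) elaboration of what the paper leaves implicit. For $\rptgc$, however, you take a genuinely different route. The paper does not argue from first principles at all: it invokes the prior determinacy and optimal-strategy result of~\cite{FKNT10a} for turn-based probabilistic timed games, noting only that \assumref{pta-assum} guarantees the closedness of clock constraints required by that result. Your primary route---Borel/Blackwell determinacy for perfect-information stochastic games to obtain a value, followed by a compactness-and-semicontinuity argument to show the per-state optima over delays are attained---is essentially a sketch of how such a result is proved from scratch, and it correctly identifies why closed, bounded constraints matter (attainment of suprema over delays) and why turn-based games escape the non-determinacy of the concurrent games of~\cite{FKNT16}. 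What it buys is self-containedness; what it costs is rigour: applying Borel determinacy to a game with uncountable state and action spaces raises measurability issues you do not address, and assembling pointwise-optimal delay choices into a single measurable optimal strategy (a selector argument over the region structure) is precisely the technical content of~\cite{FKNT10a}, not a one-line consequence.

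Your ``equivalent'' alternative route---transferring optimal strategies from $\nptgc$ to $\rptgc$ via the correspondence of the two semantics---should be dropped: it is circular in the context of this paper. The correspondence theorems (\thmref{reach-thm} and in particular \thmref{exp2-thm}) use \propref{determined-prop} in their proofs, so the equality of optimal values cannot be assumed when proving the proposition. Moreover, even granted the value equality, \propref{strat2-prop} only says that an integer-delay strategy of $\nptgc$ induces a strategy of $\rptgc$ achieving the same value against the \emph{induced} opposing strategies; showing it guarantees that value against \emph{every} real-time opponent is exactly the content of the digitization machinery (\lemref{variable-lem} and the $\varepsilon$-digitization of~\cite{KNPS06}), not something that comes for free from the embedding of $\Nset^\clocks$ into $\Rset^\clocks$.
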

\begin{proof}
In the case of $\rptgc$, the result follows from \cite{FKNT10a} and \assumref{pta-assum}, i.e. since all clock constraints are closed. Considering $\nptgc$, the result follows from the fact that the game has a finite state space and is finitely branching~\cite{Krc09}. \qed
\end{proof}
\begin{proposition}\label{strat2-prop} 
For any strategy profile $\sigma'$ of $\nptgc$, there exists a strategy profile $\sigma$ of $\rptgc$
such that 
$\Pset^{\sigma}(F_\Rset) = \Pset^{\sigma'}(F_\Nset)$ and $\Eset^{\sigma}(F_\Rset) = \Eset^{\sigma'}(F_\Nset)$.
\end{proposition}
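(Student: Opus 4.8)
The plan is to exploit the fact that the digital clocks game $\nptgc$ embeds into the dense-time game $\rptgc$: every integer delay available in $\nptgc$ is also a legal delay in $\rptgc$, so a digital profile can be mimicked in the dense-time setting by restricting attention to integer delays. This is the easier of the two correspondence directions (going from dense to digital would require rounding arbitrary real delays), and I would prove it by constructing $\sigma$ so that, along any run it generates, it replays exactly the integer delays and actions that $\sigma'$ chooses.

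Concretely, since the initial state $(\linit,\mathbf{0})$ has an integer valuation and $\sigma$ will only ever select delays $t \in \Nset$, every state $(l,w)$ reachable under $\sigma$ in $\rptgc$ has $w \in \Nset^\clocks$. For such a $w$ define the capped valuation $\hat w \in \Nset^\clocks$ by $\hat w(x) = \min\{w(x),k_x{+}1\}$; this maps each reachable dense state $(l,w)$ to the state $(l,\hat w)$ of $\nptgc$, and lifting it to finite paths gives a map $\pi \mapsto \hat\pi$ from the integer-valued paths of $\rptgc$ to the paths of $\nptgc$. I would then set $\sigma(\pi) = \sigma'(\hat\pi)$ for the player controlling $\last(\pi)$, a distribution over integer time-action pairs each of which is a legal choice in $\rptgc$; on the (never reached, hence measure-zero) set of non-integer paths, $\sigma$ may be defined arbitrarily.

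The core is a \emph{one-step correspondence}: $(l,w)$ and $(l,\hat w)$ satisfy the same invariants and enabling conditions, and a delay $t \in \Nset$ is enabled from $(l,\hat w)$ in $\nptgc$ iff it is enabled from $(l,w)$ in $\rptgc$. This is exactly where \assumref{pta-assum}(b) and the integrality of the constants are used: every conjunct $x{\leq}c$ or $x{\geq}c$ has $c \leq k_x$, so capping at $k_x{+}1$ preserves its truth, and by monotonicity of clock progression the satisfaction of $\inv(l)$ over the real interval $[0,t]$ reduces to its endpoints, which agrees with the integer-point check in the digital semantics. One also checks that capping commutes with addition and reset, so the dense successor $w' = (w{+}t)[X{:=}0]$ maps to the digital successor $\hat{w'}$ and $\probt$ assigns identical probabilities. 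Given this, $\pi \mapsto \hat\pi$ is a probability-preserving bijection between the reachable finite paths of $\langle\rptgc,\sigma\rangle$ and $\langle\nptgc,\sigma'\rangle$, and a standard cylinder-set argument shows $\Prob^\sigma$ and $\Prob^{\sigma'}$ agree on corresponding events. The two quantities then match: targets depend only on locations, so $(l,w) \in F_\Rset$ iff $(l,\hat w) \in F_\Nset$, giving $\Pset^{\sigma}(F_\Rset) = \Pset^{\sigma'}(F_\Nset)$; and since $R((l,w),(t,a)) = t{\cdot}\lrew(l)+\arew(l,a)$ is independent of the valuation and uses the same integer $t$ in both games, per-step prices coincide, yielding $\Eset^{\sigma}(F_\Rset) = \Eset^{\sigma'}(F_\Nset)$.

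The main obstacle is the enabledness equivalence in the one-step correspondence, i.e.\ showing that integer-point satisfaction in the capped digital semantics coincides with continuous satisfaction in the dense semantics; this is precisely where closedness, diagonal-freeness and integrality of the clock-constraint constants are indispensable. The remaining bookkeeping (commuting the capping map with resets and additions, and the measure-theoretic transfer) is routine.
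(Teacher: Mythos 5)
Your proposal is correct and follows essentially the same route as the paper: the paper's (appendix) proof simply takes $\sigma$ to make the same choices as $\sigma'$ and observes that the only discrepancy is that clock values in $\rptgc$ are not capped at $k_x{+}1$. You have merely spelled out the details the paper leaves implicit (integrality of reachable valuations, the capping map commuting with delay/reset, and equivalence of constraint satisfaction under capping), all of which are sound.
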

Using the $\epsilon$-digitization approach of   TAs~\cite{HMP92}, which has been extended to PTAs in~\cite{KNPS06}, the following theorem follows, demonstrating the correctness of the digital clocks semantics for probabilistic reachability properties.
\begin{theorem}\label{reach-thm}
For any TPTG $\ptg$ satisfying \assumref{pta-assum}, coalition of players $C$ and set of locations $F \subseteq \loc:$ $\Pset^{C}_\rptg (F_{\Rset}) = \Pset^{C}_\nptg (F_{\Nset})$.
\end{theorem}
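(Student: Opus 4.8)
The plan is to prove the two inequalities $\Pset^C_\rptg(F_\Rset) \geq \Pset^C_\nptg(F_\Nset)$ and $\Pset^C_\rptg(F_\Rset) \leq \Pset^C_\nptg(F_\Nset)$ separately, relying throughout on \propref{determined-prop} so that both games are determined and both players possess genuinely optimal (rather than merely $\epsilon$-optimal) strategies. This lets me fix an optimal strategy for whichever player is convenient and freely exchange the order of $\sup_{\sigma_1}$ and $\inf_{\sigma_2}$ in both games.

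For $\Pset^C_\rptg(F_\Rset) \geq \Pset^C_\nptg(F_\Nset)$, I would start from the optimal coalition (maximiser) strategy in $\nptgc$, which secures the value $\Pset^C_\nptg(F_\Nset)$ against every digital opponent. Transferring this strategy to $\rptgc$ via \propref{strat2-prop}, which preserves the reachability probability of a profile, I obtain a dense coalition strategy; to show it still guarantees the same value against an \emph{arbitrary} dense opponent, I would digitise the opponent's real-valued delays with the $\epsilon$-digitisation map of~\cite{HMP92,KNPS06} and argue, using that all clock constraints and the target $F_\Tset$ are closed (\assumref{pta-assum}(b)), that the reachability probability is unchanged by this rounding. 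This bounds the dense value from below by $\Pset^C_\nptg(F_\Nset)$.

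For the reverse inequality I would dualise the construction, fixing the optimal minimising profile of $\rptgc$ and $\epsilon$-digitising the coalition's real-valued delays to obtain a digital coalition strategy. Here closedness of the constraints and of the target is again what guarantees that rounding a delay schedule reaching $F_\Rset$ produces an integer schedule reaching $F_\Nset$ with the same probability, while \assumref{pta-assum}(a) and (c) keep the digital game finite-state and the induced reset distributions $\probt$ well defined, so that the constructed digital profile realises a value no smaller than the dense one. Combining the two inequalities yields the claimed equality.

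The hard part is making the $\epsilon$-digitisation commute with the game's branching structure. Classical digitisation~\cite{HMP92} rounds a single, fixed timed word, whereas here the delay taken at each step is chosen online as a function of the history --- the opponent's past real delays and the random outcomes of the probabilistic resets $\probt$ --- so the rounding must be performed step-by-step and must be compatible with the resolution of \emph{both} players' strategies simultaneously and with the probability measure $\Prob^\sigma$. The crux is to exhibit a measure-preserving correspondence between dense and digital paths under which membership in the reachability event $\{\pi \mid \pi(i) \in F \text{ for some } i\}$ is preserved; closedness of every invariant, enabling condition and target region (\assumref{pta-assum}(b)) is precisely the hypothesis that makes this event invariant under rounding, and is therefore the assumption on which the whole argument hinges.
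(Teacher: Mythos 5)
Your proposal follows essentially the same route as the paper: the inequality from digital to dense comes from the direct strategy transfer of \propref{strat2-prop}, the converse from $\varepsilon$-digitisation of dense strategy profiles, and the measure-preserving correspondence you identify as the crux is exactly the paper's \propref{strat1-prop} (built from \defref{pathdig-def} and \defref{estrat-def}), with closedness of the clock constraints playing precisely the role you assign to it. The one caveat is that the digitisation must be applied to the whole profile jointly with respect to $\Prob^\sigma$ (as you yourself note in your final paragraph), not to one player's delays while the other retains a strategy transferred from the digital game, since the latter is undefined on histories containing non-integer delays.
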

For expected price reachability properties, we extend the approach of \cite{KNPS06}, by first showing that, for any fixed (dense-time) profile $\sigma$ of $\rptgc$ and $n \in \Nset$, there exist profiles of $\nptgc$ whose expected price 
of reaching the target locations $F$ within $n$ transitions 
that provide lower and upper bounds for that of $\sigma$.  

For $\Tset \in \{ \Rset,\Nset\}$, profile $\sigma$ of $\tptgc$, and finite path $\pi \in \fpaths^\sigma$ we inductively define the values
$\langle \Eset^\sigma_n(\pi,F_\Tset) \rangle_{n \in \Nset}$ which equal the expected price, under the profile $\sigma$, of reaching the target $F_\Tset$ after initially performing the path $\pi$ within $n$ steps. To ease presentation we only give the definition for deterministic profiles.
\begin{definition}\label{EB-def}
For $\Tset \in \{ \Rset,\Nset\}$, strategy profile $\sigma{=}(\sigma_1,\sigma_2)$ of $\tptgc$ and  
finite path $\pi$ of the profile let $\Eset^{\sigma_1,\sigma_2}_{0} (\pi,F_{\Tset})=0$ and for any $n \in \Nset$, if $\mathit{last}(\pi){=}(l,v) \in S_i$ for $1{\leq}i{\leq}2$, $\sigma_i(\pi){=}(t,a)$ and $\mu = P_\tptg((l,v),(t,a))$, then:
\begin{align*}
\Eset^{\sigma}_{n+1} (\pi,F_{\Tset}) =  \left\{ \begin{array}{cl}
0 & \; \mbox{if $(l,v) \in F_{\Tset}$} \\
\lrew(l){\cdot}t + \arew(l,a) + \sum\limits_{s' \in S} \mu(s') 
\cdot \Eset^{\sigma}_{n} (\pi \xrightarrow{t,a}s',F_{\Tset}) & \;\mbox{otherwise.}
\end{array} \right.
\end{align*}
\end{definition}
We require the following properties of these expected price reachability properties. These then allow us to prove the correctness of the digital clocks semantics for expected price reachability properties (\thmref{exp2-thm} below).
\begin{lemma}\label{lim-lem}
For $\Tset \in \{ \Rset,\Nset\}$ and profile $\sigma$ of $\tptgc$, the sequence
$\langle \Eset^{\sigma}_{n} (F_{\Tset}) \rangle_{n \in \Nset}$ is non-decreasing and converges to $\Eset^{\sigma}(F_{\Tset})$, and, for any player 1 strategy $\sigma_1$ of $\tptgc$, the sequence of functions $\Eset^{\sigma_1,\cdot}_{n} (F_{\Tset}) : \Sigma^2 \ra \Rset$ converges uniformly.
Furthermore, for any player 1 strategy $\sigma_1$, the sequence $\langle  \inf\nolimits_{\sigma_2 \in \Sigma^2} \Eset^{\sigma_1,\sigma_2}_{n} (F_{\Tset}) \rangle_{n \in \Nset}$ is non-decreasing and converges to $\inf\nolimits_{\sigma_2 \in \Sigma^2}\Eset^{\sigma_1,\sigma_2}(F_{\Tset})$, and the sequence of functions $\inf\nolimits_{\sigma_2 \in \Sigma^2} \Eset^{\cdot,\sigma_2}_{n} (F_{\Tset}) : \Sigma^1 \ra \Rset$ converges uniformly.
\end{lemma}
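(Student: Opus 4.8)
The plan is to derive all four assertions from a single quantitative statement, namely the \textbf{uniform-over-all-profiles} convergence
\[
\varepsilon_n \rmdef \sup\nolimits_{\sigma}\bigl(\Eset^{\sigma}(F_\Tset) - \Eset^{\sigma}_n(F_\Tset)\bigr) \to 0 \text{ as } n \to \infty,
\]
where $\sigma$ ranges over all strategy profiles of $\tptgc$. Everything else is either a routine induction or a formal consequence of this fact.

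First I would settle the pointwise statements (part 1 and the monotonicity in part 3). Monotonicity of $\langle \Eset^{\sigma}_n(\pi,F_\Tset)\rangle_n$ is an induction on $n$: the base case holds as $\Eset^{\sigma}_1(\pi,F_\Tset) \geq \Eset^{\sigma}_0(\pi,F_\Tset)=0$ since elapsed time and the prices $\lrew,\arew$ are non-negative, and the inductive step reduces the difference at level $n{+}1$ to that at level $n$ via $\Eset^{\sigma}_{n+1}(\pi,F_\Tset) - \Eset^{\sigma}_n(\pi,F_\Tset) = \sum_{s'}\mu(s')\bigl(\Eset^{\sigma}_n(\pi\xrightarrow{t,a} s',F_\Tset) - \Eset^{\sigma}_{n-1}(\pi\xrightarrow{t,a} s',F_\Tset)\bigr)$ when $\last(\pi)\notin F_\Tset$ (and the difference is $0$ otherwise). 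Unfolding \defref{EB-def} identifies $\Eset^{\sigma}_n(F_\Tset)$ with the $\Prob^\sigma$-expectation of $\mathit{rew}(\pi,F_\Tset)$ truncated to its first $n$ transitions; as prices are non-negative these truncations increase pointwise to $\mathit{rew}(\pi,F_\Tset)$, so monotone convergence gives convergence to $\Eset^{\sigma}(F_\Tset)$, which is part 1. The monotonicity in part 3 is then immediate, a pointwise infimum of a non-decreasing family being non-decreasing.

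The substance is $\varepsilon_n\to 0$. I would bound the tail $\Eset^{\sigma}(F_\Tset) - \Eset^{\sigma}_n(F_\Tset)$ by a uniform per-step price times the tail probability of not yet reaching $F$. By \assumref{pta-assum}(a) all invariants are bounded, so the time elapsed on any single transition is at most a constant $t^{\max}$, and as $\lrew,\arew$ are bounded integers the one-step price is bounded by a constant $R^{\max}<\infty$, uniformly over states, transitions and over $\Tset\in\{\Rset,\Nset\}$. Writing $q^{\sigma}_k = \Prob^\sigma(\{\pi \mid \pi(i)\notin F_\Tset \text{ for } i\le k\})$ and bounding each one-step price by $R^{\max}$, a counting argument gives $\Eset^{\sigma}(F_\Tset) - \Eset^{\sigma}_n(F_\Tset) \le R^{\max}\sum_{k\ge n} q^{\sigma}_k$. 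It then suffices to prove \emph{uniform geometric decay}: there exist $N\in\Nset$ and $\lambda\in[0,1)$ such that from every state, under every profile, $F$ is reached within $N$ transitions with probability at least $1{-}\lambda$; iterating from every state gives $\sup_\sigma q^{\sigma}_{jN}\le\lambda^j$, so $\sum_{k\ge n}\sup_\sigma q^{\sigma}_k\to 0$ and hence $\varepsilon_n\to 0$.

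Granting $\varepsilon_n\to 0$, parts 2 and 4 and the convergence in part 3 follow formally. Since the differences $\Eset^{\sigma}(F_\Tset) - \Eset^{\sigma}_n(F_\Tset)$ are non-negative, for fixed $\sigma_1$ we get $\sup_{\sigma_2}\lvert \Eset^{\sigma_1,\sigma_2}_n(F_\Tset) - \Eset^{\sigma_1,\sigma_2}(F_\Tset)\rvert \le \varepsilon_n\to 0$, which is part 2; and applying $\lvert\inf_x a(x)-\inf_x b(x)\rvert \le \sup_x\lvert a(x)-b(x)\rvert$ with $x$ over $\Sigma^2$ yields $\lvert \inf\nolimits_{\sigma_2}\Eset^{\sigma_1,\sigma_2}_n(F_\Tset) - \inf\nolimits_{\sigma_2}\Eset^{\sigma_1,\sigma_2}(F_\Tset)\rvert \le \varepsilon_n$ for every $\sigma_1$, giving the uniform convergence of part 4 and, read pointwise, the convergence in part 3. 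The main obstacle is the uniform geometric decay. For $\nptgc$ it rests on the game being finite-state and finitely branching (cf.\ \propref{determined-prop}) together with the well-posedness of expected price reachability (finiteness of the values, as reflected by the codomain $\Rset$), which forces $F$ to be reached almost surely under every profile; a standard finite-game argument then delivers a single $N,\lambda$ valid from all states and profiles. For the infinite-state game $\rptgc$ I would transfer this bound through the reachability correspondence behind \thmref{reach-thm} (equivalently, the finiteness of the region graph), which matches reachability probabilities in $\rptgc$ with those of a finite quotient; making this transfer and the bound $R^{\max}$ precise in the dense-time setting are the delicate points.
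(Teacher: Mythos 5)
Your pointwise arguments (monotonicity by induction on \defref{EB-def}, convergence via monotone convergence, and the transfer of a uniform bound through $\lvert\inf_x a(x)-\inf_x b(x)\rvert\le\sup_x\lvert a(x)-b(x)\rvert$) are fine, but the load-bearing step --- uniform geometric decay of $q^\sigma_k$ over all states and all profiles --- has a genuine gap, and the way you propose to obtain it does not work. Finiteness of $\Eset^{\sigma}(F_\Tset)$ does \emph{not} force $F$ to be reached almost surely under every profile: the prices $\lrew,\arew$ are only required to be non-negative, so a profile that loops forever in zero-price locations has finite (indeed zero) expected price while $q^\sigma_k\equiv 1$, and nothing in \assumref{pta-assum} or the lemma's hypotheses excludes such profiles (in the case studies they arise naturally, e.g.\ runs on which the protocol never terminates successfully). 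Without almost-sure reachability under \emph{every} profile there is no pair $(N,\lambda)$ with $\sup_\sigma q^\sigma_{jN}\le\lambda^j$, so $\sum_{k\ge n}\sup_\sigma q^\sigma_k$ need not even be finite and your bound $\Eset^{\sigma}(F_\Tset)-\Eset^{\sigma}_n(F_\Tset)\le R^{\max}\sum_{k\ge n}q^\sigma_k$ becomes vacuous. Note also that your target $\varepsilon_n=\sup_\sigma(\Eset^{\sigma}(F_\Tset)-\Eset^{\sigma}_n(F_\Tset))\to 0$ is strictly stronger than what \lemref{lim-lem} asserts (uniformity over $\sigma_2$ for each fixed $\sigma_1$, and over $\sigma_1$ of the infimum), so you have made the problem harder than necessary and then relied on a premise that fails.

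The paper avoids all of this by taking a topological rather than quantitative route: it observes that the relevant strategy spaces are compact --- for $\nptgc$ because the action set is finite, and for $\rptgc$ because \assumref{pta-assum}(a) (bounded invariants) makes the set of available time--action pairs compact --- and then invokes Dini's theorem \cite[Theorem~7.13]{Rud76}: a non-decreasing sequence of (continuous) functions on a compact space converging pointwise to a (continuous) limit converges uniformly. This yields exactly the uniformity statements of the lemma with no rate and no reachability hypothesis. If you want to salvage a quantitative argument you would need to either impose almost-sure reachability of $F$ under all profiles (which the paper does not assume) or handle the zero-price divergent behaviour separately; as written, the geometric-decay step is the missing idea that cannot be filled in.
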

\begin{proof}
In each case, proving that the sequence is non-decreasing and converges follows from \defref{EB-def}. Uniform convergence follows from showing the set of strategies for players is compact and using the fact that the sequences are non-decreasing and converge pointwise \cite[Theorem~7.13]{Rud76}. In the case when $\Tset{=}\Nset$, compactness follows from the fact the action set is finite, while for $\Tset{=}\Rset$ we must restrict to PTAs for which all invariants are bounded (\assumref{pta-assum}) to ensure the action set is compact. \qed
\end{proof}
\begin{lemma}\label{variable-lem}
For any strategy profile $\sigma$ of $\rptgc$
and $n \in \Nset$, there exist strategy profiles $\sigma^{\mathit{lb}}$ and $\sigma^{\mathit{ub}}$ of $\nptgc$ such that: $\Eset^{\sigma^{\mathit{lb}}}_{n}(F_{\Nset}) 
\; \leq \; 
\Eset^{\sigma}_{n}(F_{\Rset})
\; \leq \;
\Eset^{\sigma^{\mathit{ub}}}_{n}(F_{\Nset})$.
\end{lemma}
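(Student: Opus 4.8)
The plan is to adapt the $\epsilon$-digitisation of timed automata~\cite{HMP92,KNPS06}, but to organise it around an averaging identity so that the lower and upper bounds drop out simultaneously from the \emph{same} family of digital profiles. Since $\Eset^{\sigma}_n$ is defined only for deterministic profiles, I work with deterministic $\sigma$ and treat the general case as a mixture. For $\epsilon \in [0,1)$ and $\tau \in \Rset$, recall the digitisation $\digi{\tau}{\epsilon}$, equal to $\lfloor \tau \rfloor$ if $\tau{-}\lfloor\tau\rfloor {\leq} \epsilon$ and to $\lceil \tau \rceil$ otherwise. Fixing $\epsilon$, I would turn $\sigma$ into a profile $\sigma^\epsilon$ of $\nptgc$ as follows: along any path generated by $\sigma$, with delays $t_0,t_1,\dots$ and absolute time points $\tau_0{=}0$, $\tau_{i+1}{=}\tau_i{+}t_i$, the $i$-th digital delay is taken to be $\digi{\tau_{i+1}}{\epsilon}{-}\digi{\tau_i}{\epsilon} \in \Nset$, while the action and the probabilistic branching are kept unchanged. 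As $\sigma$ is deterministic, each such path is identified by its branching outcomes, so $\sigma^\epsilon$ is well defined on the paths it generates (and set arbitrarily elsewhere, which does not affect $\Eset^{\sigma^\epsilon}_n$).

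The first substantial step is to verify that $\sigma^\epsilon$ is a genuine profile of $\nptgc$, i.e.\ that the digitised transitions are valid. This is exactly where \assumref{pta-assum}(b) is used: a clock $x$ last reset at absolute time $\tau_x$ has digital value $\digi{\tau_i}{\epsilon}{-}\digi{\tau_x}{\epsilon}$, which always lies in $\{ \lfloor v(x)\rfloor , \lceil v(x)\rceil \}$ for the real value $v(x){=}\tau_i{-}\tau_x$; since all invariants and enabling conditions are closed with integer bounds, both $\lfloor v(x)\rfloor$ and $\lceil v(x)\rceil$ satisfy such a constraint whenever $v(x)$ does, so the digital valuation does too. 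Continuous satisfaction of invariants reduces to the endpoints (upper bounds at the end of the delay, lower bounds at the start), so it is handled by the same argument, and the capping at $k_x{+}1$ is harmless since it affects only clocks already beyond every constant they are compared with. This is the digitisation lemma of~\cite{HMP92} lifted to the probabilistic setting as in~\cite{KNPS06}, and it also gives the measure- and target-preserving correspondence between the paths of $\sigma$ and those of $\sigma^\epsilon$ (as already used for \propref{strat2-prop}): corresponding paths carry the same probability and reach $F$ after the same number of transitions.

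The key idea is then the following averaging identity. Fix a branching outcome $\omega$ and let $m {=} \min\{n,k_F\}$ be the common number of accumulated steps. The action prices $\arew(l_i,a_i)$ are identical in the two semantics, whereas the location price of the digitised path is $\sum_{i<m} (\digi{\tau_{i+1}}{\epsilon}{-}\digi{\tau_i}{\epsilon}){\cdot}\lrew(l_i)$. Using the elementary fact $\int_0^1 \digi{\tau}{\epsilon}\,\mathrm{d}\epsilon = \tau$, integrating each digital delay over $\epsilon \in [0,1)$ returns the real delay $t_i = \tau_{i+1}{-}\tau_i$, so the $\epsilon$-average of the digital price of the path equals its real price. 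Averaging over $\omega$ and interchanging the (finite-horizon, non-negative) expectation with the integral by Tonelli yields
\[
\Eset^{\sigma}_{n}(F_{\Rset}) \; = \; \int_0^1 \Eset^{\sigma^\epsilon}_{n}(F_{\Nset}) \, \mathrm{d}\epsilon \, .
\]
Since $\Eset^{\sigma}_{n}(F_{\Rset})$ is thus the mean over $[0,1)$ of the non-negative function $\epsilon \mapsto \Eset^{\sigma^\epsilon}_{n}(F_{\Nset})$, this function must attain some value $\leq \Eset^{\sigma}_{n}(F_{\Rset})$ at some $\epsilon_{\mathit{lb}}$ and some value $\geq \Eset^{\sigma}_{n}(F_{\Rset})$ at some $\epsilon_{\mathit{ub}}$; taking $\sigma^{\mathit{lb}}{=}\sigma^{\epsilon_{\mathit{lb}}}$ and $\sigma^{\mathit{ub}}{=}\sigma^{\epsilon_{\mathit{ub}}}$ gives the claim.

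I expect the main obstacle to be the second step rather than the averaging trick: carefully setting up the digitisation map so that $\sigma^\epsilon$ is a well-defined profile of $\nptgc$, and verifying the measure- and target-preserving correspondence between the dense and digital paths, including the bookkeeping for clock resets and the $k_x{+}1$ capping. By contrast, once $\int_0^1 \digi{\tau}{\epsilon}\,\mathrm{d}\epsilon = \tau$ and the per-outcome reward equality are in place, the averaging identity and the concluding pigeonhole argument are short. (An alternative to the averaging step would be to exhibit two specific extremal digitisations giving the two bounds, but note that $\Eset^{\sigma^\epsilon}_n(F_\Nset)$ need not be monotone in $\epsilon$, which is why the mean-value argument is the cleaner route to both inequalities at once.)
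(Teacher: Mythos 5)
Your argument is correct, but it reaches the two bounds by a genuinely different route from the paper. The paper freezes the action choices of $\sigma$ and introduces one timing variable $t_\pi$ per finite path, constrained by difference inequalities with floor/ceiling bounds so that every feasible timing vector induces a valid profile of $\rptgc$ making the same action choices; the expected $n$-step price is then a linear objective over this polytope, which contains the real timing vector, and total unimodularity of the difference-constraint matrix guarantees that the maximum and minimum are attained at integer vectors, i.e.\ at profiles of $\nptgc$. You instead stay inside the one-parameter $\varepsilon$-digitisation family (exactly the machinery of \defref{pathdig-def}, \defref{estrat-def} and \propref{strat1-prop} that the paper already needs for \thmref{reach-thm}), establish the per-delay identity $\int_0^1 \digi{\tau}{\varepsilon}\,\mathrm{d}\varepsilon = \tau$ and hence $\Eset^{\sigma}_{n}(F_{\Rset}) = \int_0^1 \Eset^{\sigma^\varepsilon}_{n}(F_{\Nset})\,\mathrm{d}\varepsilon$, and conclude by a mean-value argument; your remark that $\varepsilon \mapsto \Eset^{\sigma^\varepsilon}_{n}(F_{\Nset})$ need not be monotone correctly explains why one must average rather than take the endpoints $\varepsilon\in\{0,1\}$. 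Both proofs rest on the same digitisation-validity facts for closed, diagonal-free constraints (\assumref{pta-assum}(b)). What your route buys: it reuses existing machinery, avoids the linear-programming detour, and does not require rationality of the probabilities (\assumref{pta-assum}(c)), which the paper invokes to make the LP objective integral. What the paper's route buys: the bounding profiles are exhibited as integer optima over the \emph{entire} polytope of timing perturbations compatible with $\sigma$'s actions, not just the $\varepsilon$-family, matching the form of the PTA result in~\cite{KNPS06}. One caveat you share with the paper: $\Eset^{\sigma}_n$ is defined only for deterministic profiles, so your restriction to deterministic $\sigma$ (with randomised profiles handled as mixtures, or directly by digitising them via \defref{estrat-def} and applying the same per-path averaging) is at the paper's own level of rigour.
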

\begin{theorem}\label{exp2-thm}
For any TPTG $\ptg$ satisfying \assumref{pta-assum}, coalition of players $C$ and set of locations $F \subseteq \loc:$ $\Eset_\rptg^C (F_{\Rset})  = \Eset_\nptg^C (F_{\Nset} )$.
\end{theorem}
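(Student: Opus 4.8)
The plan is to prove the two inequalities $\Eset_\rptg^C(F_\Rset) \le \Eset_\nptg^C(F_\Nset)$ and $\Eset_\rptg^C(F_\Rset) \ge \Eset_\nptg^C(F_\Nset)$ separately. Throughout I would invoke \propref{determined-prop} freely, both to write each value in whichever of the equivalent forms $\sup_{\sigma_1}\inf_{\sigma_2}$ or $\inf_{\sigma_2}\sup_{\sigma_1}$ is convenient, and to replace suprema and infima by genuinely optimal strategies where that simplifies the bookkeeping. Following the PTA argument of \cite{KNPS06}, the central device is to route everything through the finite-horizon values: writing $v_\Rset^n = \sup_{\sigma_1\in\Sigma^1}\inf_{\sigma_2\in\Sigma^2}\Eset^{\sigma_1,\sigma_2}_n(F_\Rset)$ and $v_\Nset^n$ analogously for the digital semantics, I would first compare $v_\Rset^n$ with $v_\Nset^n$ for each fixed $n$, and only afterwards let $n\to\infty$. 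This is forced on us because strategies in the dense game carry real-valued delays that cannot be transferred to the digital game without loss, whereas the truncated price $\Eset^\sigma_n$ of \defref{EB-def} depends on only finitely many transitions and can therefore be controlled one transition at a time.

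The finite-horizon comparison is where \lemref{variable-lem} and \propref{strat2-prop} do the work. The embedding direction $\Eset_\nptg^C(F_\Nset) \le \Eset_\rptg^C(F_\Rset)$ rests on \propref{strat2-prop}, which realises any digital profile as a dense profile of equal value; the point still to be argued is that the dense minimiser's extra (non-integer) responses cannot lower the guarantee below the digital optimum, and this is where the lower digitisation $\sigma^{\mathit{lb}}$ of \lemref{variable-lem} is used to match every dense response, up to horizon $n$, by a digital response of no greater price. For the reverse direction I would apply the two-sided bounds of \lemref{variable-lem} to sandwich the finite-horizon price of a dense profile between those of digital profiles. The key structural observation that lets the per-profile bounds of \lemref{variable-lem} be promoted to the minimax values $v_\Rset^n, v_\Nset^n$ is that the underlying digitisation is defined transition-by-transition: it acts on the choice made at a single history and hence respects the partition $\langle S_i\rangle_{i\in\Pi}$, so that a maximiser strategy is sent to a maximiser strategy and a minimiser response to a minimiser response. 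Arranging the up- and down-roundings of the two players' delays so that the \emph{game} values, and not merely individual profile values, are squeezed is the delicate step, and it yields $v_\Rset^n = v_\Nset^n$ for every $n$.

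Finally I would pass to the limit using \lemref{lim-lem}. Each sequence of functions $\inf_{\sigma_2}\Eset^{\cdot,\sigma_2}_{n}(F_\Tset)$ on $\Sigma^1$ is non-decreasing and converges \emph{uniformly} to $\inf_{\sigma_2}\Eset^{\cdot,\sigma_2}(F_\Tset)$; uniform convergence is precisely what permits interchanging the limit with the outer supremum, giving $v_\Tset^n \to \Eset_{\sem{\ptg}_\Tset}^C(F_\Tset)$ for $\Tset\in\{\Rset,\Nset\}$, and combined with $v_\Rset^n = v_\Nset^n$ this delivers the theorem. I expect the main obstacle to be exactly the two features that the finite-state MDP argument of \cite{KNPS06} does not have to confront: first, lifting the per-profile digitisation bounds of \lemref{variable-lem} to the $\sup_{\sigma_1}\inf_{\sigma_2}$ game values, which needs the response-matching argument above together with determinacy so that fixing one player's optimal strategy reduces the residual problem to a one-sided (MDP-like) optimisation; and second, justifying the exchange of $\lim_n$ with $\sup_{\sigma_1}\inf_{\sigma_2}$, for which the pointwise monotone convergence in \lemref{lim-lem} would be insufficient and its uniform convergence is essential.
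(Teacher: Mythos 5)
Your proposal is correct and follows essentially the same route as the paper's proof: comparing the finite-horizon values $\sup_{\sigma_1}\inf_{\sigma_2}\Eset^{\sigma_1,\sigma_2}_n$ in the two semantics via the sandwich of \lemref{variable-lem} and the embedding of \propref{strat2-prop}, invoking \propref{determined-prop} to lift the per-profile bounds to game values, and then using the uniform convergence of \lemref{lim-lem} to exchange the limit in $n$ with the $\sup$/$\inf$. Your explicit remarks on why the digitisation respects the player partition and why pointwise convergence alone would not suffice are points the paper leaves implicit, but the argument is the same.
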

\begin{proof}
Consider any $n \in \Nset$.
Using \lemref{variable-lem} it follows that, for any profile $\sigma{=}(\sigma_1,\sigma_2)$ of $\rptg$, there exist profiles $\sigma^{\mathit{lb}}{=}(\sigma_1^{\mathit{lb}},\sigma_1^{\mathit{lb}})$ and $\sigma^{\mathit{ub}}{=}(\sigma_1^{\mathit{ub}},\sigma_1^{\mathit{ub}})$ of $\nptgc$ such that:
\[
\Eset^{\sigma_1^{\mathit{lb}},\sigma_2^{\mathit{lb}}}_{n}(F_{\Nset})
\; \leq \;
\Eset^{\sigma_1,\sigma_2}_{n}(F_{\Rset}) 
\; \leq \;
\Eset^{\sigma_1^{\mathit{ub}},\sigma_2^{\mathit{ub}}}_{n}(F_{\Nset}) \, .
\]
On the other hand, using the construction in the proof of \propref{strat2-prop}, for any profile $\sigma'{=}(\sigma_1',\sigma_2')$ of $\nptg$, there exists a profile $\sigma{=}(\sigma_1,\sigma_1)$ of $\rptgc$ such that:
\[
\Eset^{\sigma_1,\sigma_2}_n(F_\Nset) \; = \;  \Eset^{\sigma_1',\sigma_2'}_n(F_\Rset) \, .
\]
Combining these results with \propref{determined-prop} it follows that:
\[  \begin{array}{c}
\sup\nolimits_{\sigma_1'  \in \Sigma^1_\nptgc} \!\! \inf\nolimits_{\sigma_2' \in \Sigma^2_\nptgc} \Eset^{\sigma_1',\sigma_2'}_n(F_{\Nset})
\; = \; 
\sup\nolimits_{\sigma_1  \in \Sigma^1_\rptgc} \!\! \inf\nolimits_{\sigma_2 \in \Sigma^2_\rptgc} \Eset^{\sigma_1,\sigma_2}_n(F_{\Rset}) \, .
\end{array} \]
Since $n \in \Nset$ was arbitrary, we have:
\[
\begin{array}{c}
\lim\limits_{n \ra \infty} \sup\nolimits_{\sigma_1'  \in \Sigma^1_\nptgc} \!\! \inf\nolimits_{\sigma_2' \in \Sigma^2_\nptgc} \Eset^{\sigma_1',\sigma_2'}_n(F_{\Nset})
= 
\lim\limits_{n \ra \infty} \sup\nolimits_{\sigma_1  \in \Sigma^1_\rptgc} \!\! \inf\nolimits_{\sigma_2 \in \Sigma^2_\rptgc} \Eset^{\sigma_1,\sigma_2}_n(F_{\Rset})
\end{array}
\]
and hence using \lemref{lim-lem} it follows that:
\[ \begin{array}{c}
\sup\nolimits_{\sigma_1'  \in \Sigma^1_\nptgc} \!\! \inf\nolimits_{\sigma_2' \in \Sigma^2_\nptgc} \Eset^{\sigma_1',\sigma_2'}(F_{\Nset})
\; = \;
\sup\nolimits_{\sigma_1  \in \Sigma^1_\rptgc} \!\! \inf\nolimits_{\sigma_2 \in \Sigma^2_\rptgc} \Eset^{\sigma_1,\sigma_2}(F_{\Rset}) \, .
\end{array} \]
The fact that the limit can move inside the $\sup$ and $\inf$ operators on both sides of the inequality follows from the uniform convergence results of \lemref{lim-lem}. \qed
\end{proof}

\section{Case Studies}\label{case-sect}

In this section, we apply our approach to two case studies, a security protocol and a scheduling problem, both of which have been previously modelled as PTAs~\cite{KNPS06}. In both case studies, by working with games we are able to give more realistic models that overcome the limitations of the earlier PTA models. We specify the finite-state TSG digital clocks semantic models of the case studies using the PRISM language and employ the PRISM-games tool~\cite{KPW18} to perform the analysis. 
Using PRISM-games we are not only able to find optimal probabilistic and expected reachability values, but also synthesise optimal strategies for the players. PRISM files for the case studies are available from~\cite{files}.
%

\startpara{Non-repudiation protocol} Markowitch and Roggeman's non-repudiation protocol for information transfer~\cite{MR99} is designed to allow an originator $O$ to transfer information to a recipient $R$ while guaranteeing non-repudiation, that is, neither $O$ nor $R$ can deny that they participated in the transfer. 

Randomisation is fundamental to the protocol as, in the initialisation step, $O$ randomly selects a positive integer $N$ that is never revealed to $R$ during execution. Timing is also fundamental as, to prevent $R$ potentially gaining an advantage, if $O$ does not receive an acknowledgement within a specific timeout value (denoted $\mathit{AD}$), the protocol is stopped and $O$ states $R$ is trying to cheat. In previous PTA models of the protocol~\cite{LMT05,NPS13} the originator $O$ had fixed behaviour, while the choices of a (malicious) recipient $R$ included varying the delay between receiving a message from $O$ and sending an acknowledgement. By modelling the protocol as a two-player game we can allow both $O$ and $R$ to make choices which can depend on the history, i.e., the previous behaviour of the parties. The game is naturally turn-based since, in each round, first $O$ sends a message after a delay of their choosing and, after receiving this message, $R$ can respond with an acknowledgement after a delay of their choosing.

\begin{figure}[t]
\centering
{\subfigure[Originator]{\includegraphics[scale=0.26]{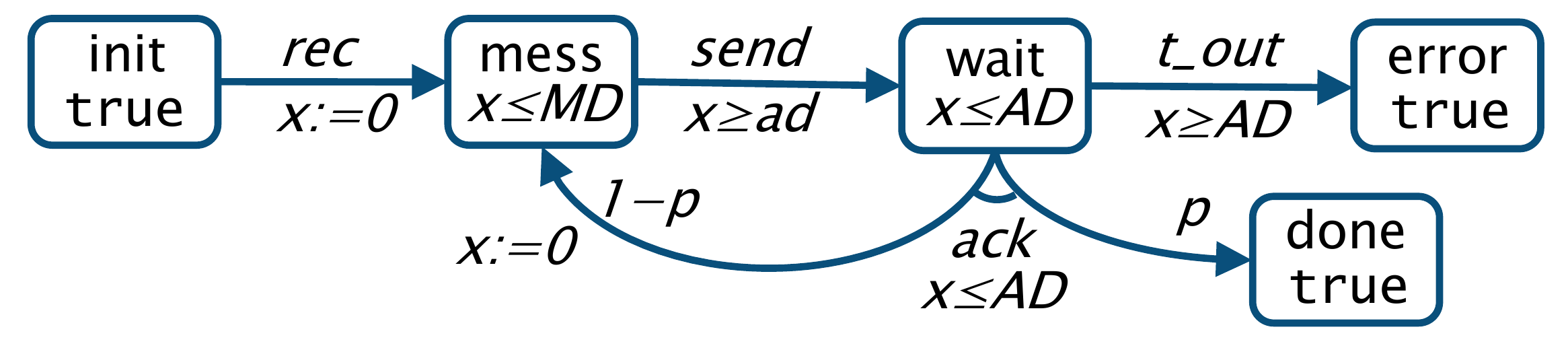}}}
\hfil
{\subfigure[Honest recipient]{\includegraphics[scale=0.26]{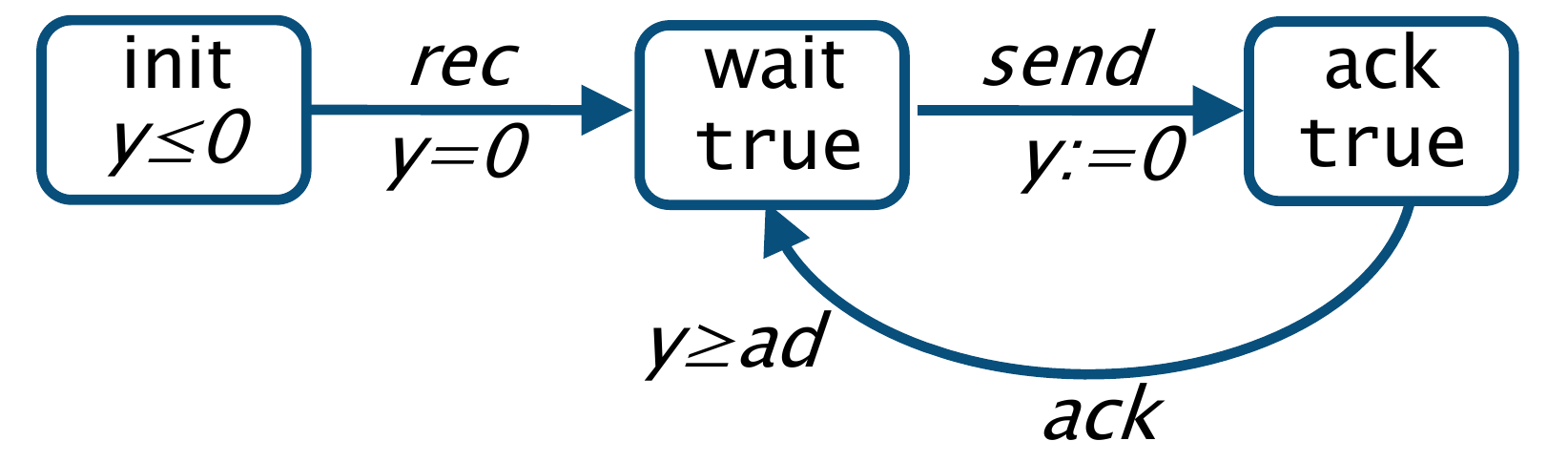}}}
\vspace*{-0.2cm}
\caption{PTAs used to model the non-repudiation protocol}\label{repudiation-fig}
\vspace*{-0.6cm}
\end{figure}

We first consider an `honest' version of the protocol where both $O$ and $R$ can choose delays for their messages but do follow the protocol (i.e., send messages and acknowledgements before timeouts occur). The component PTA models for $O$ and $R$ are presented in \figref{repudiation-fig}. In the PTA for $O$, the message delay is between $\mathit{md}{=}2$ and $\mathit{MD}{=}9$ time units, while the acknowledgement delay is at least $\mathit{ad}{=}1$ time units and $\mathit{AD}{=}5$ is the timeout value.
In addition, the probabilistic choice of $N$ is made using a geometric distribution with parameter $p\in(0,1]$. The parallel composition of these two components then gives the TPTG model of the protocol by assigning control of locations to either $O$ or $R$, based on which party decides on the delay. There is a complication in the location where $O$ is waiting and $R$ is sending an acknowledgement, as the delay before sending the acknowledgement controlled by $R$, while if the timeout is reached $O$ should end the protocol. However, since $O$'s behaviour is deterministic in this location, we assign this location to be under the control of $R$, but add the constraints that an acknowledgement can only be sent before the timeout is reached. If $O$'s behaviour was not deterministic, then a turn-based model would not be sufficient and the protocol would need to be modelled as a concurrent game.

We also consider two `malicious' versions of the protocol, one in which $R$ is allowed to guess which is the last message (malicious version 1) and a version further extended by giving $R$ additional power through a probabilistic decoder that can decode a message with probability $0.25$ before $O$ will timeout (malicious version 2). The TPTG models follow the same structure as that for the `honest' version, requiring that, in the locations where $O$ is waiting for an acknowledgement, once the timeout has been reached the only possible behaviour is for the protocol to terminate and $O$ states $R$ is trying to cheat.

For the `honest' version, \figfigref{honest-time-fig}{honest-expected-fig} present results when different coalitions try to maximise the probability the protocol terminates successfully by time $T$ when $p{=}0.01$ and $p{=}0.1$ and minimise expected time for successful termination as the parameter $p$ varies. More precisely, we consider the coalition of both players ($\coalition{O,R}$), a single player ($\coalition{O}$ or $\coalition{R}$) and the empty coalition ($\coalition{}$). Using a PTA model, only the first and last cases could be considered. As can be seen, both parties have some control over the time it takes for the protocol to complete and $O$ has greater power as it can delay messages longer than $R$ (if $R$ delays too long then $O$ will terminate the protocol stating $R$ is cheating).

\begin{figure}[t]
\centering
\includegraphics[scale=0.25]{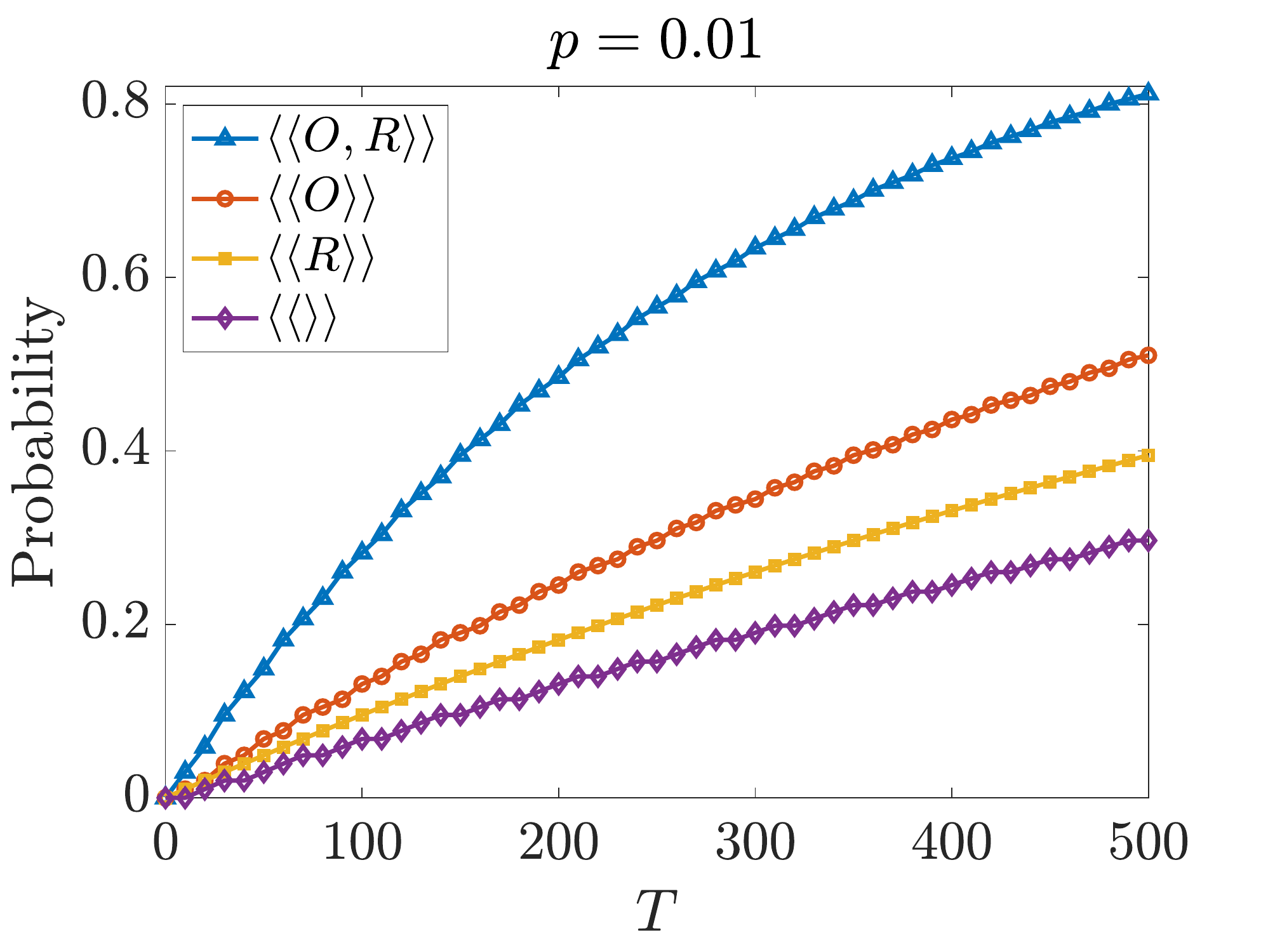}
\hfil
\includegraphics[scale=0.25]{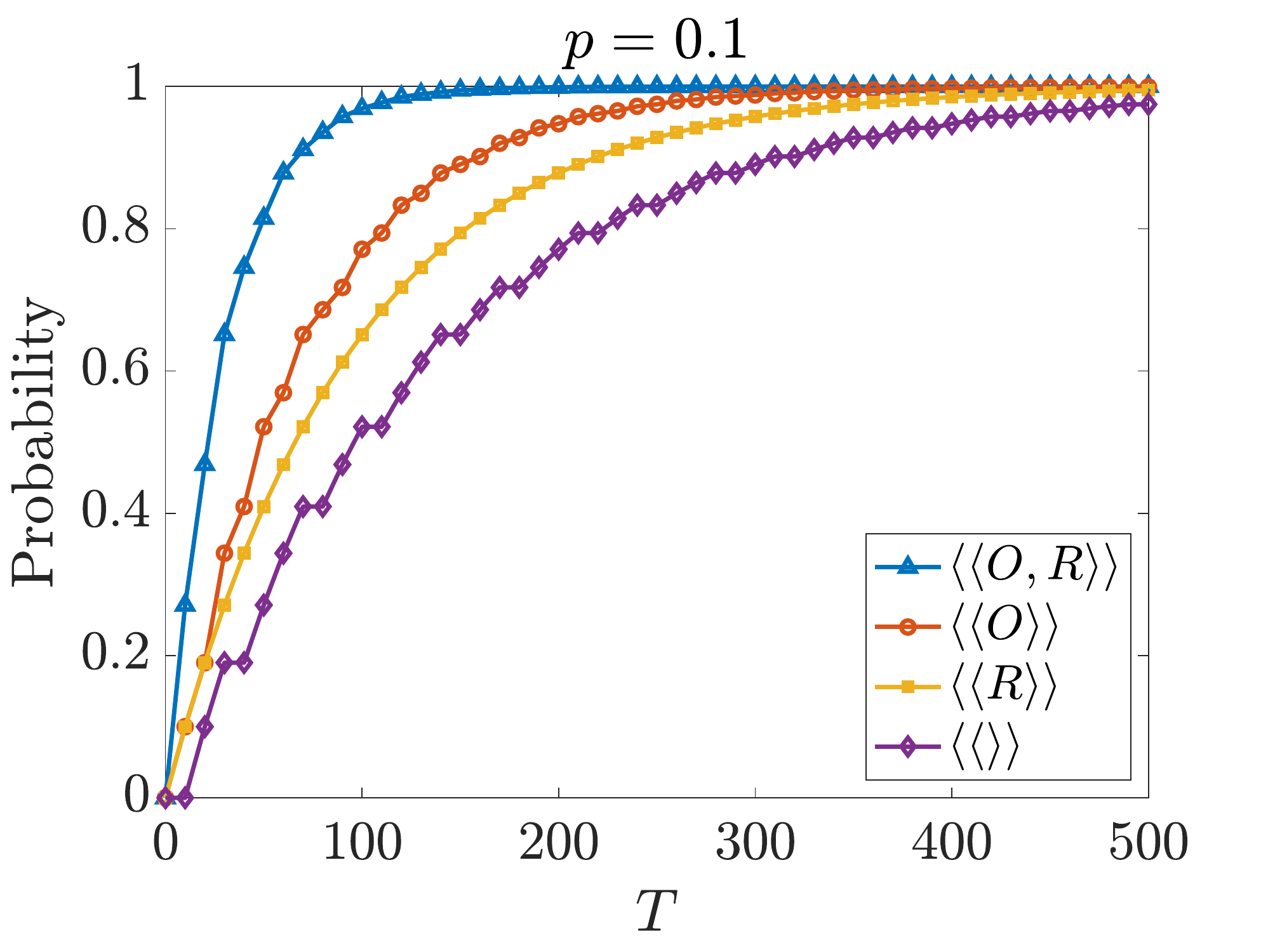}
\vspace*{-0.4cm}
\caption{Max. probability the protocol terminates successfully by time $T$ (honest version)}\label{honest-time-fig}
\vspace*{-0.4cm}
\end{figure}
\begin{figure}[t]
\centering
\includegraphics[scale=0.25]{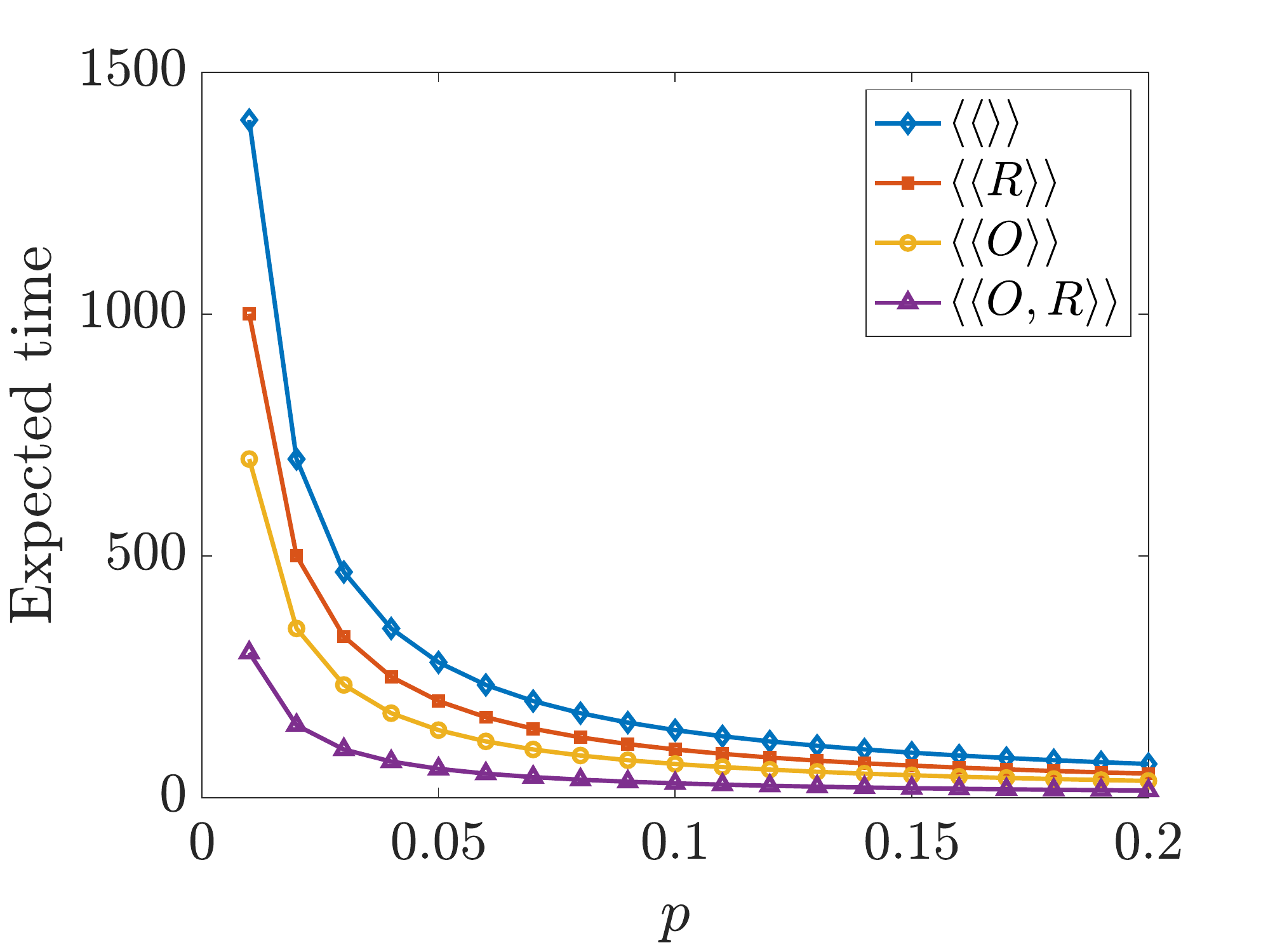}
\hfil
\includegraphics[scale=0.25]{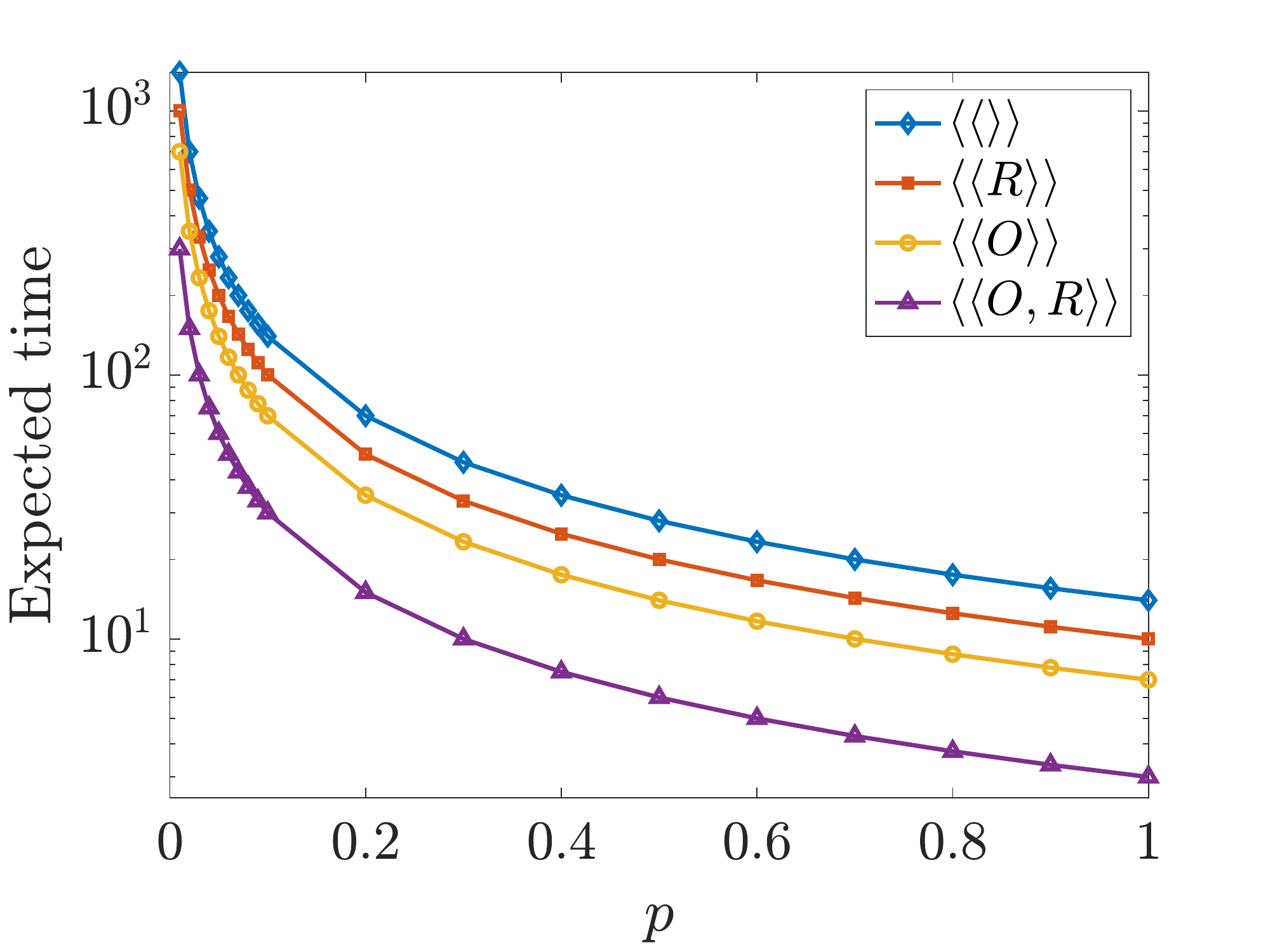}
\vspace*{-0.4cm}
\caption{Min. expected time until the protocol terminates successfully (honest version)}\label{honest-expected-fig}
\vspace*{-0.6cm}
\end{figure}

In the case of the versions with a malicious recipient, in \figfigref{basic-time-fig}{malicious-time-fig} we have plotted the maximum probability the recipient gains information by time $T$ for versions 1 and 2 respectively. We have included the cases where $O$ works against $R$ ($\coalition{R}$) and where they collaborate ($\coalition{O,R}$). As we can see, although $O$ cannot reduce the probability of $R$ obtaining information, it can to some extent increase the time it takes $R$ to obtain this information.

\begin{figure}[t]
\centering
\includegraphics[scale=0.25]{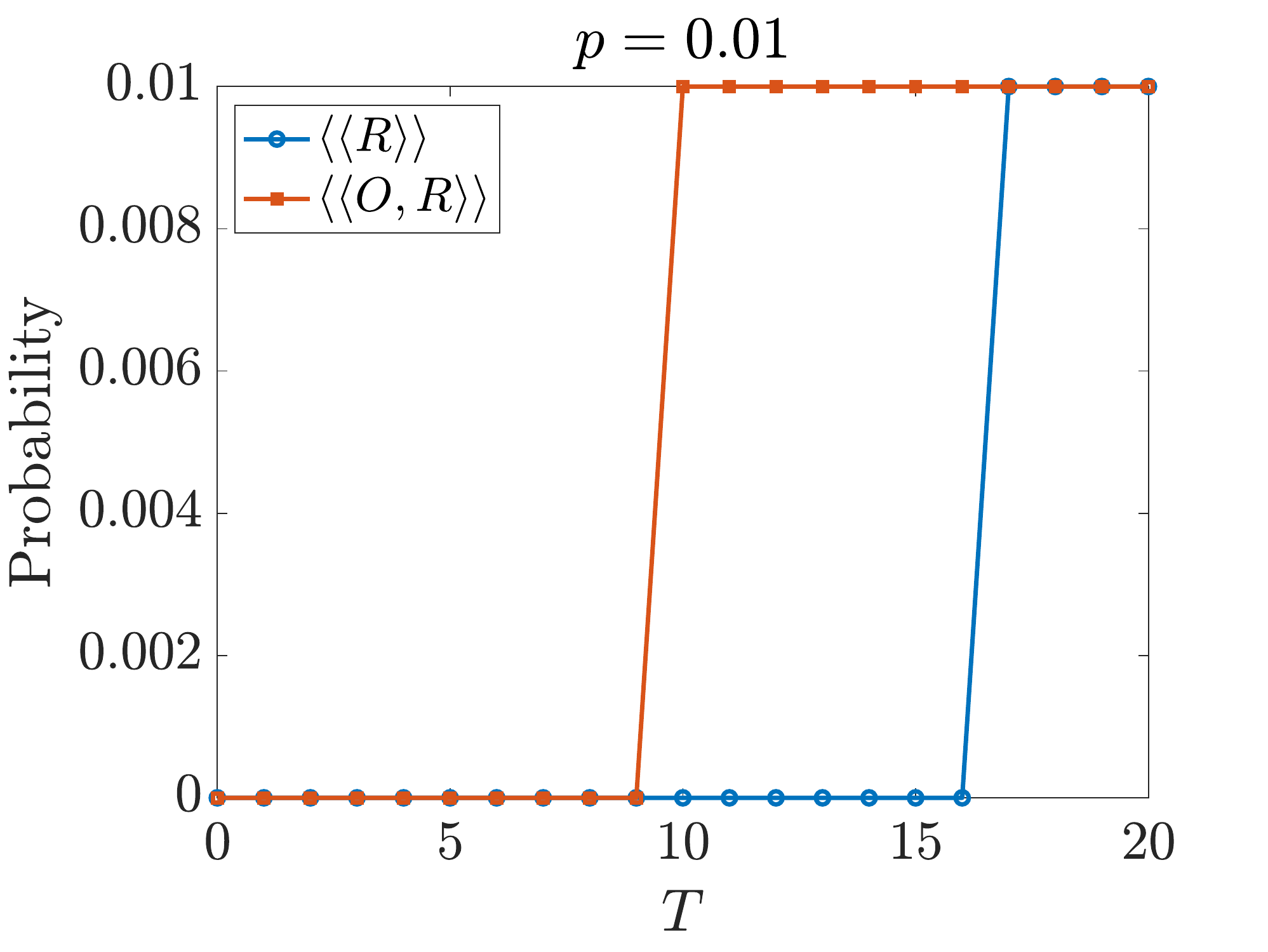}
\hfil
\includegraphics[scale=0.25]{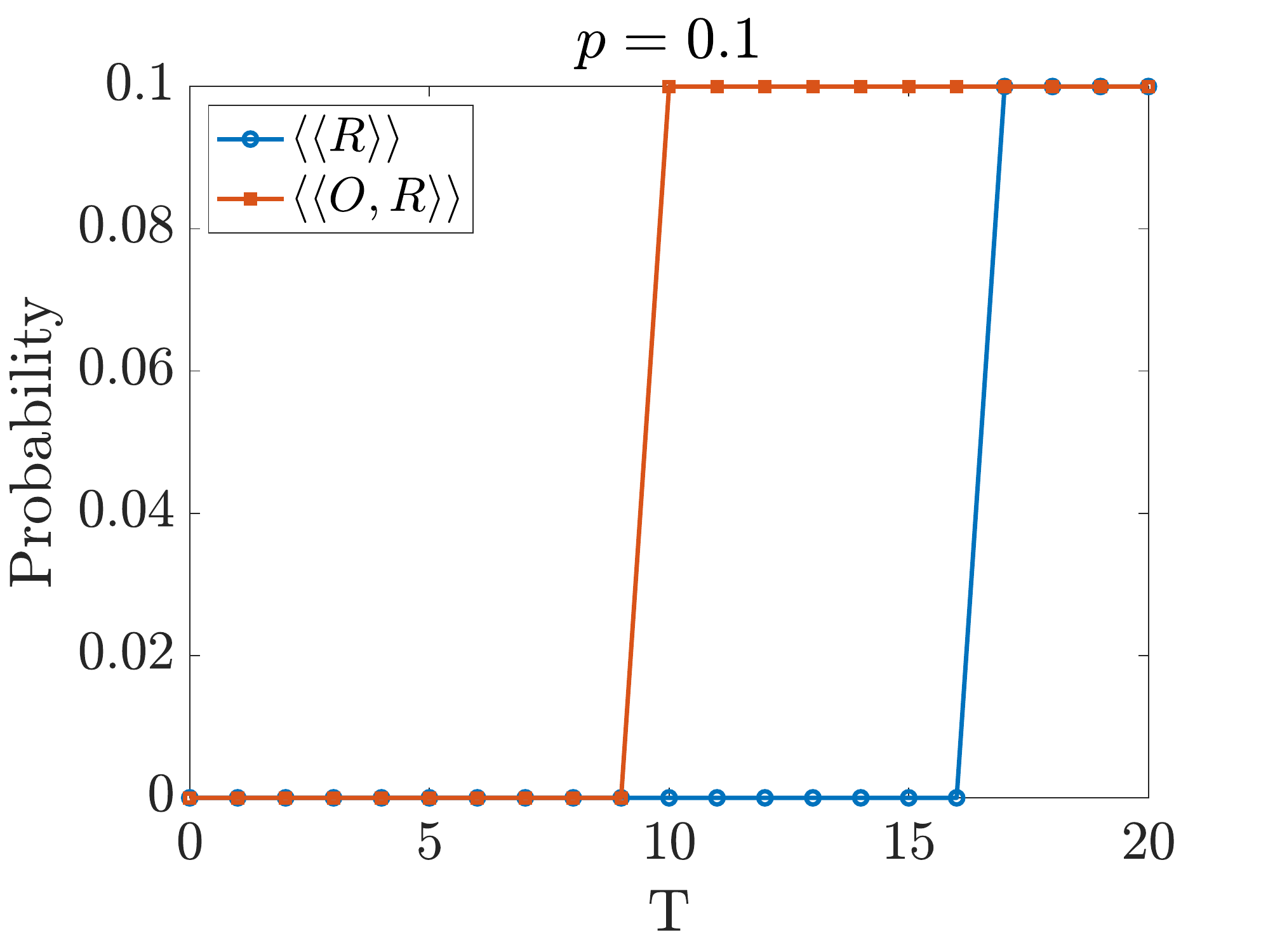}
\vspace*{-0.4cm}
\caption{Maximum probability $R$ gains information by time $T$ (malicious version 1)}\label{basic-time-fig}
\vspace*{-0.2cm}
\end{figure}
\begin{figure}[t]
\centering
\includegraphics[scale=0.25]{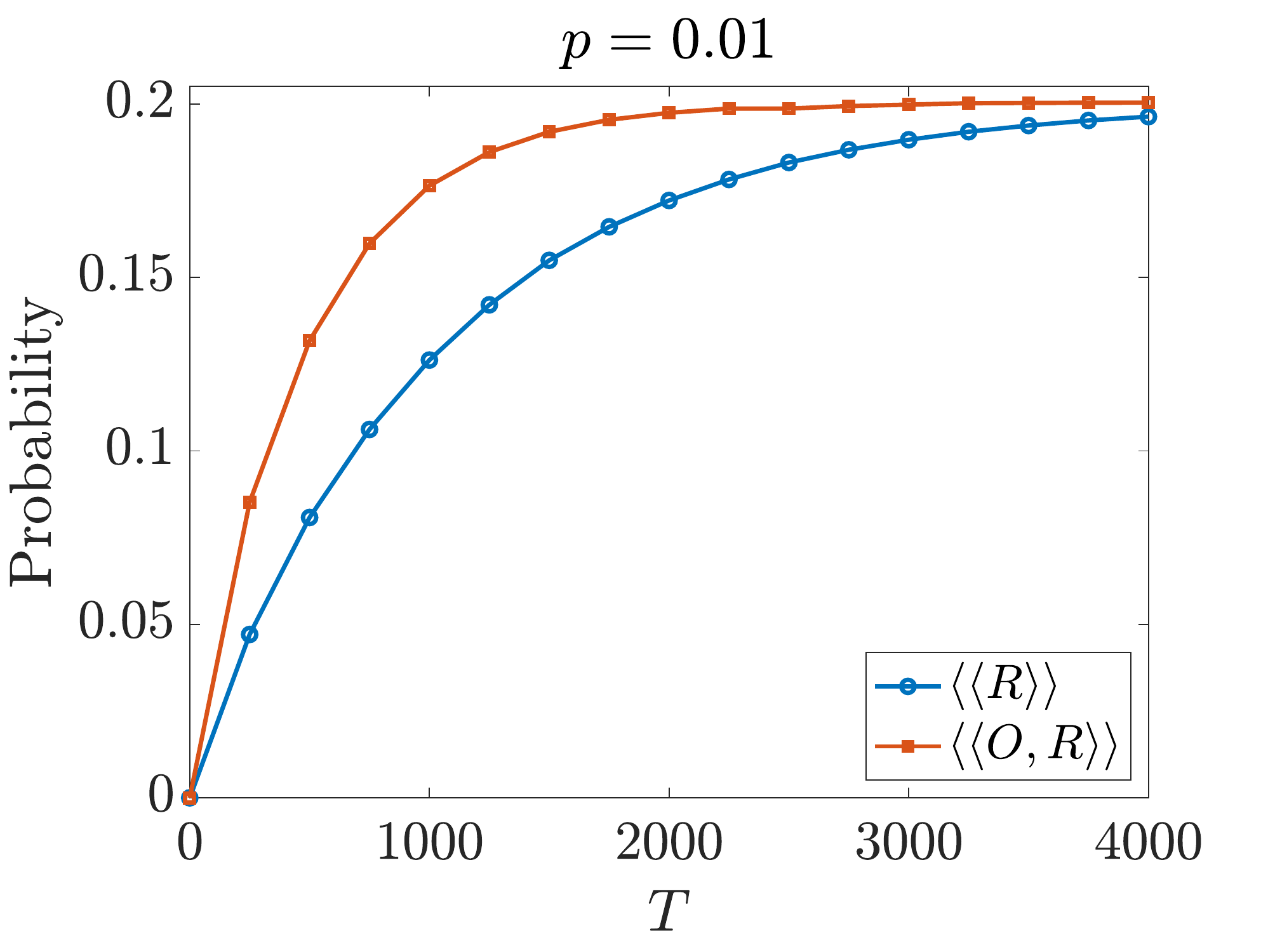}
\hfil
\includegraphics[scale=0.25]{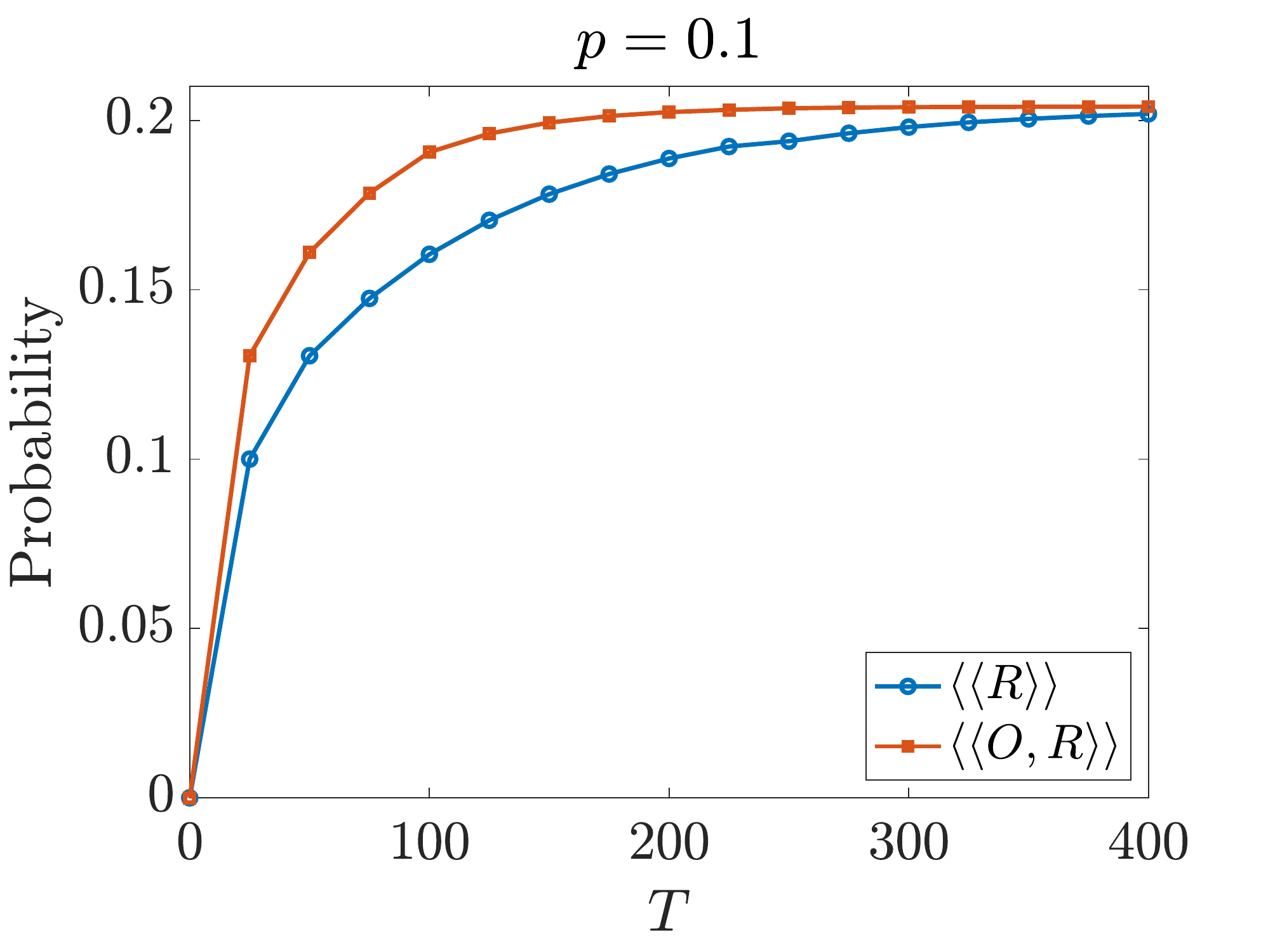}
\vspace*{-0.4cm}
\caption{Maximum probability $R$ gains information by time $T$ (malicious version 2)}\label{malicious-time-fig}
\vspace*{-0.6cm}
\end{figure}
\begin{figure}[t]
\centering
\includegraphics[scale=0.26]{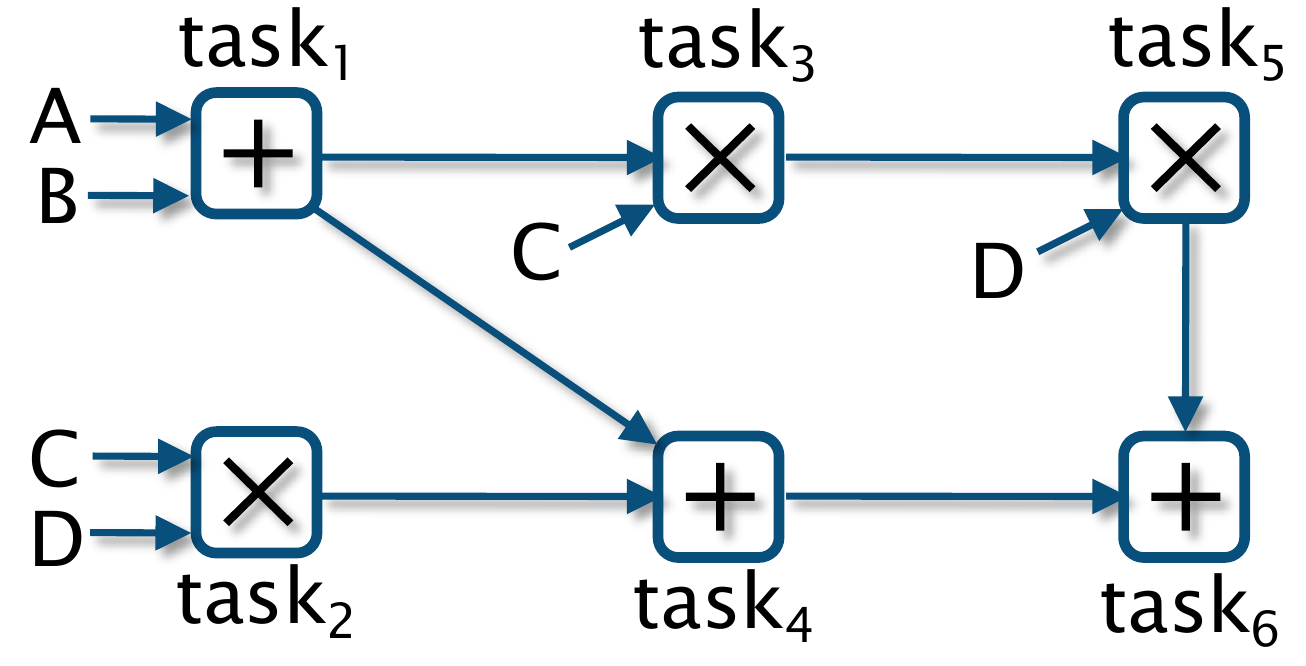}
\vspace*{-0.4cm}
\caption{Task graph for computing $D {\times} (C {\times} (A{+}B)) {+} ((A{+}B){+}(C{\times}D))$}\label{task-pta-fig}
\vspace*{-0.4cm}
\end{figure}

\startpara{Processor Task Scheduling} This case study is based on the task-graph scheduling problem from~\cite{BFLM11}. The task-graph is given in \figref{task-pta-fig} and is for evaluating the expression $D {\times} (C {\times} (A{+}B)) {+} ((A{+}B){+}(C{\times}D))$ where each multiplication and addition is evaluated on one of two processors, $P_1$ and $P_2$. The time and energy required to perform these operations is different, with $P_1$ being faster than $P_2$ while consuming more energy as detailed below.
\begin{itemize}
\item Time and energy usage of $P_1$: $[0,2]$ picoseconds for addition, $[0,3]$ picoseconds multiplication, 10 Watts when idle and 90 Watts when active.
\item Time and energy usage of $P_2$: $[0,5]$ picoseconds for addition, $[0,7]$ picoseconds multiplication, 20 Watts when idle and 30 Watts when active.
\end{itemize}
A (non-probabilistic) TA model is considered in~\cite{BFLM11}, which is the parallel composition of a TA for each processor and for the scheduler. 
Previously, in~\cite{NPS13}, we extended this model by adding probabilistic behaviour to give a PTA. However, the execution time of the processors had to remain fixed since the non-determinism was under the control of the scheduler, and therefore the optimal scheduler would always choose the minimum execution time for each operation. By moving to a TPTG model, we can allow the execution times to be under the control of a separate player (the environment). We further extend the model by allowing the processors $P_1$ and $P_2$ to have at most $k_1$ and $k_2$ faults respectively. We assume that the probability of any fault causing a failure is $p$ and faults can happen at any time a processor is active, i.e., the time the faults occur is under the control of the environment. Again, we could not model this extension with a PTA, since the scheduler would then be in control of when the faults occurred, and therefore could decide that no faults would occur.

As explained in \cite{BFLM11}, an optimal schedule for a game model in which delays can vary does not yield a simple assignment of tasks to processors at specific times as presented in \cite{NPS13} for PTAs, but instead it is an assignment that also has as input when previous tasks were completed and on which processors.

\begin{figure}[t]
\centering
{\subfigure[Processor $P_1$]{\includegraphics[scale=0.26]{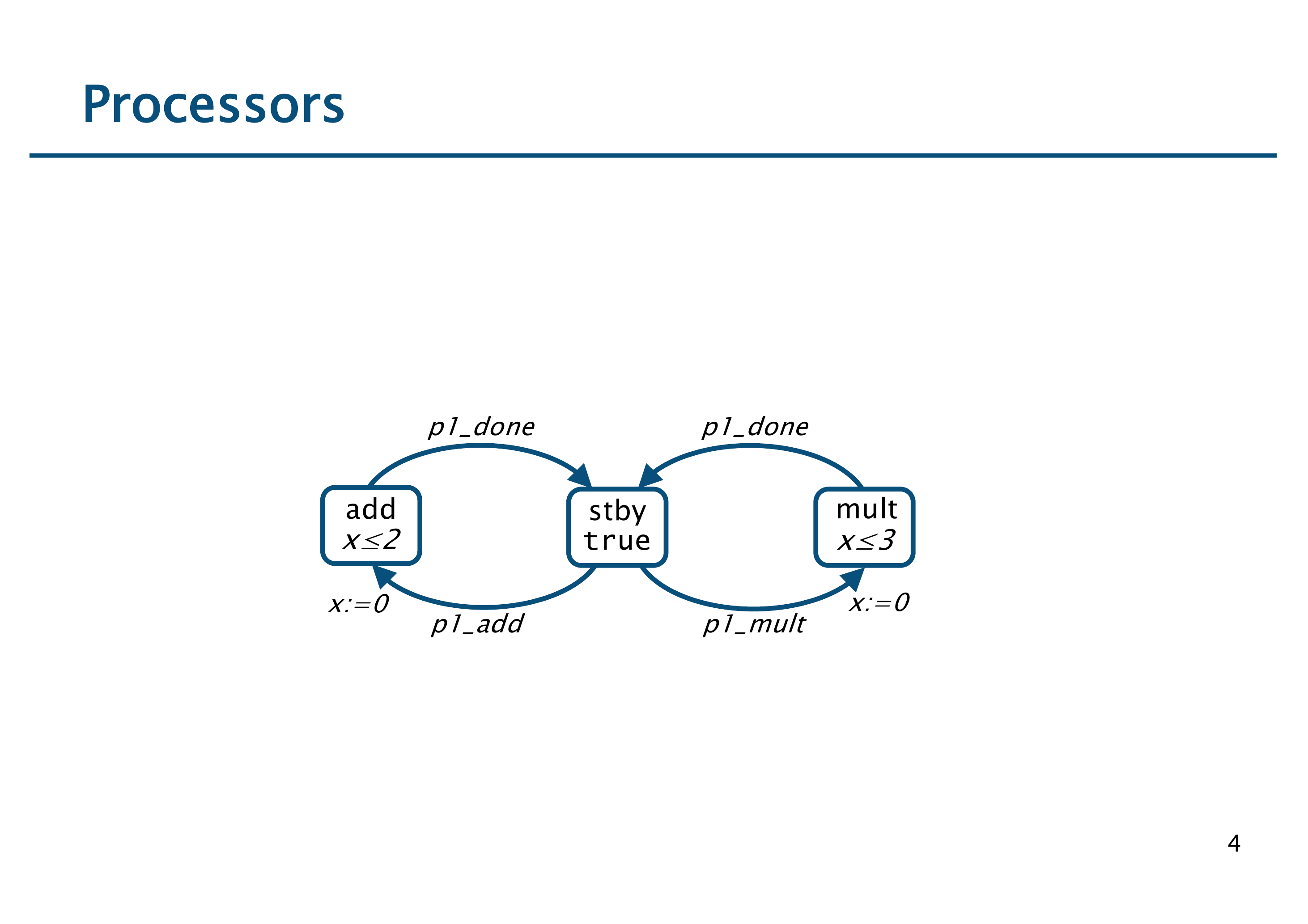}}}
\hfil
{\subfigure[Faulty version of processor $P_1$]{\includegraphics[scale=0.26]{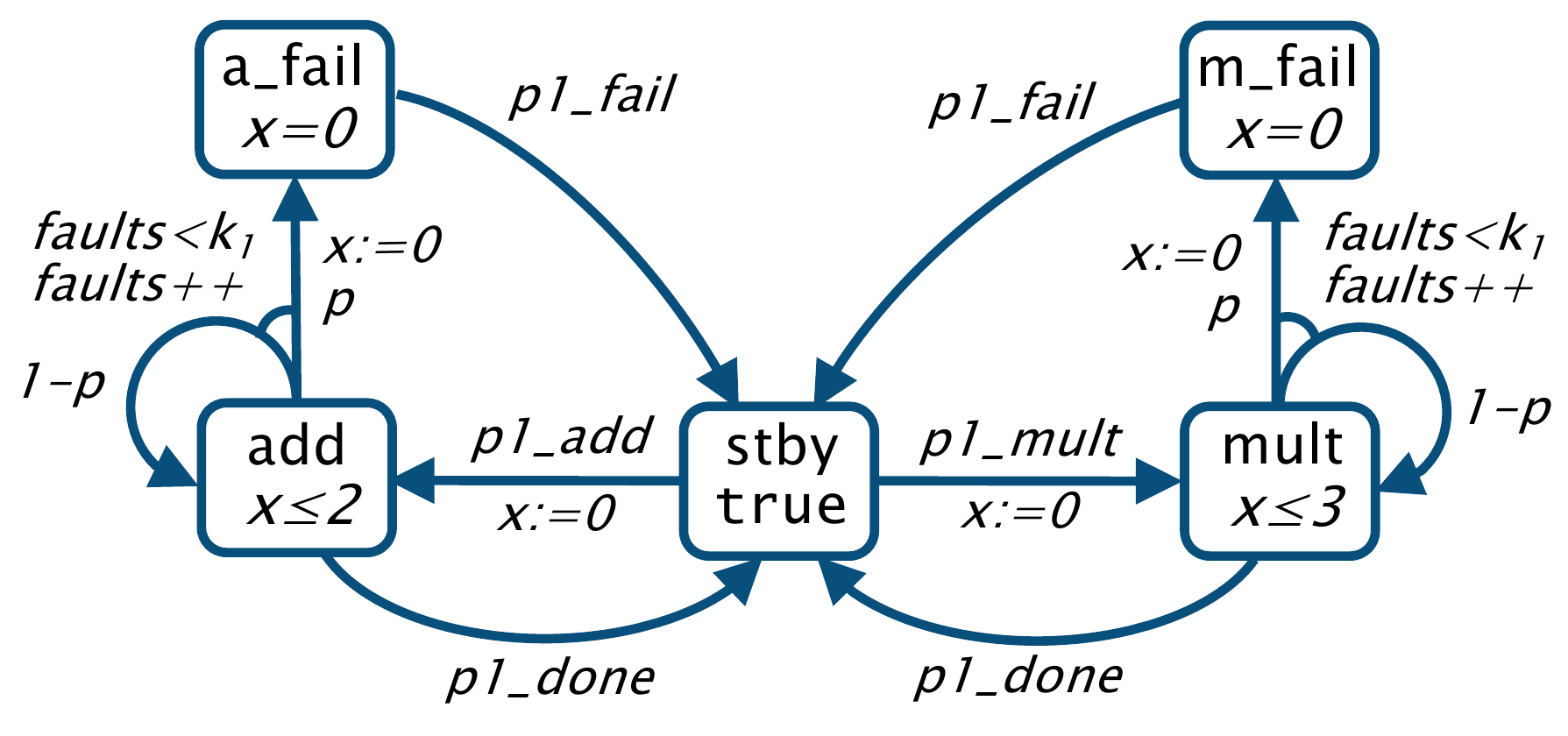}}}
\vspace*{-0.4cm}
\caption{PTAs for the task-graph scheduling case study}\label{processors-fig}
\vspace*{-0.6cm}
\end{figure}

In \figref{processors-fig} we present both the original TA model for processor $P_1$, in which the execution time is non-deterministic, and the extended PTA, which allows $k_1$ faults and where the probability of a fault causing a failure equals $p$. The PTA includes an integer variable $\mathit{faults}$ and the missing enabling conditions equal $\mathtt{true}$. To specify the automaton for the scheduler and ensure that we can then build a turn-based game, we restrict the scheduler so that it decides what tasks to schedule initially and immediately after a task ends, then passes control to the environment, which decides the time for the next active task to end.

\begin{figure}[t]
\centering
\vspace*{-0.1cm}
\includegraphics[scale=0.25]{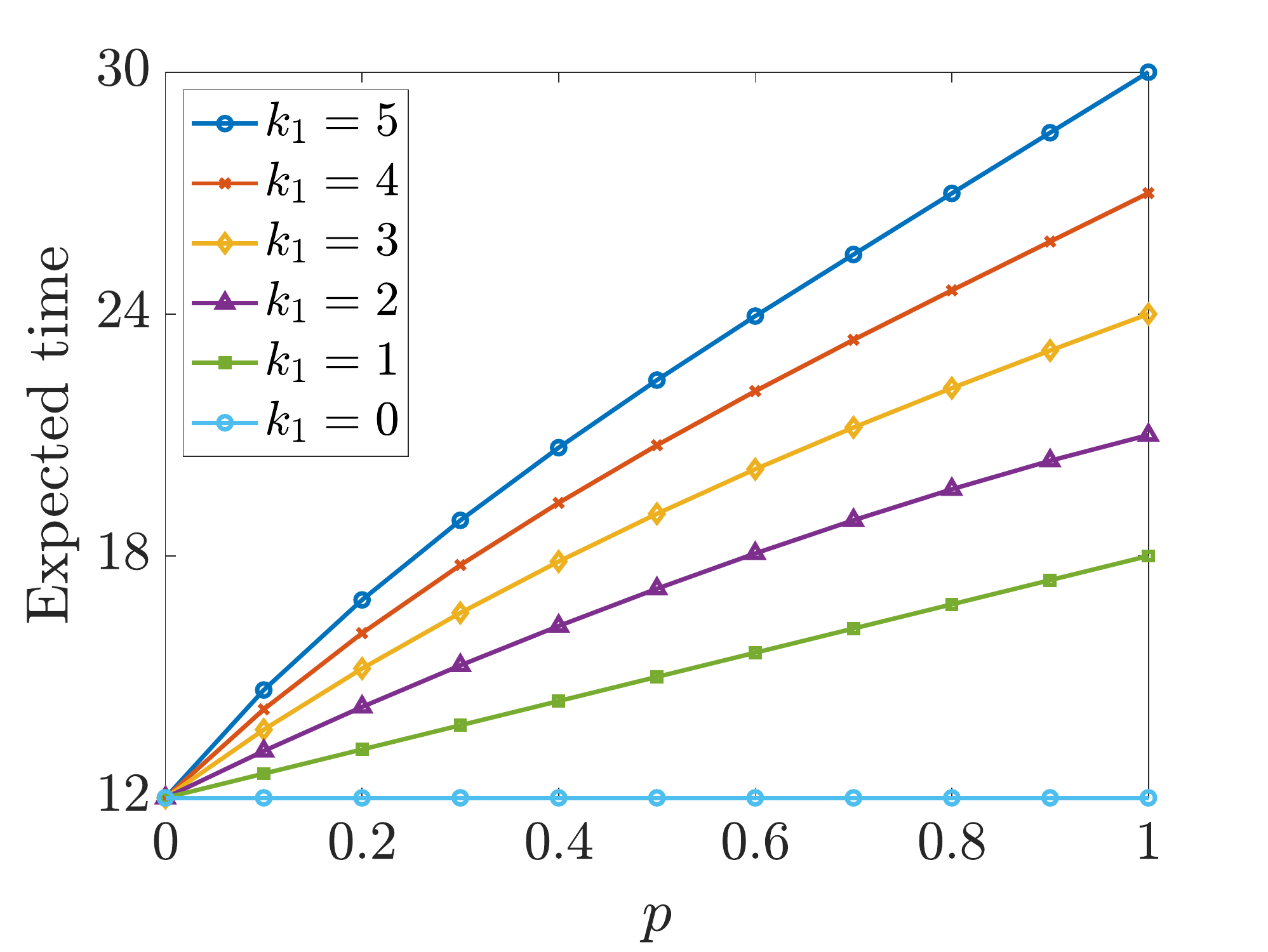}
\hfil
\includegraphics[scale=0.25]{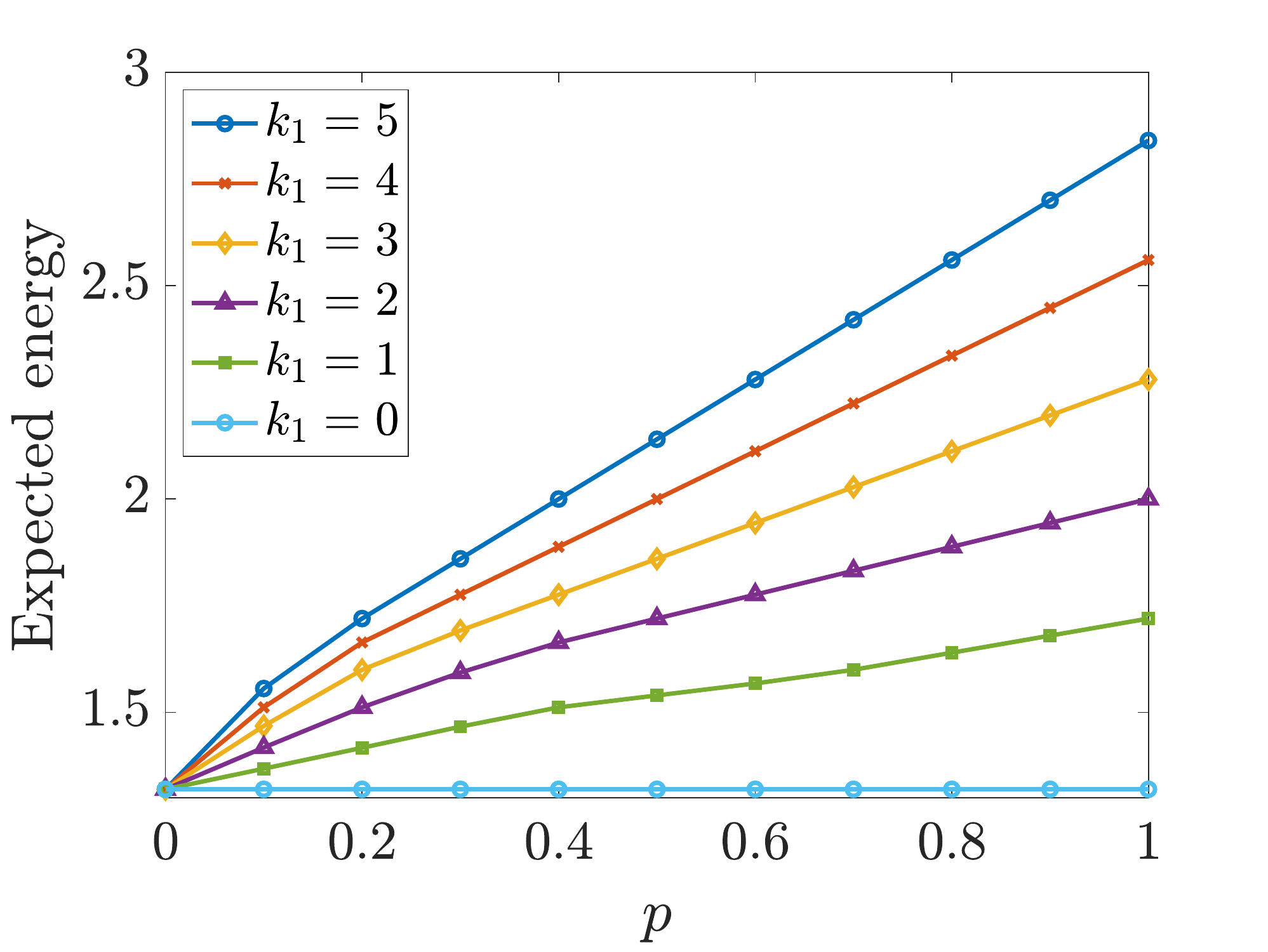}
\vspace*{-0.4cm}
\caption{Minimum expected time and energy to complete all tasks ($k_2{=}k_1$)}\label{task-expected-fig}
\vspace*{-0.6cm}
\end{figure}

In \figref{task-expected-fig} we have plotted both the optimal expected time and energy when there are different number of faults in each processor as the parameter $p$ (the probability of a fault causing a failure) varies. As would be expected, both the optimal expected time and energy consumption increases both as the number of faults increases and the probability that a fault causes a failure increases.

Considering the synthesised optimal schedulers for the expected time case, when $k_1{=}k_2{=}1$ and $p{=}1$, the optimal approach is to just use the faster processor $P_1$ and the expected time equals $18.0$. The optimal strategy for the environment, i.e., the choices that yield the worst-case expected time, against this scheduler is to delay all tasks as long as possible and cause a fault when a multiplication task is just about to complete on $P_1$ (recall $P_2$ is never used under the optimal scheduler). A multiplication is chosen as this takes longer (3 picoseconds) than an addition task (2 picoseconds). These choices can be seen through the fact that $18.0$ is the time for 4 multiplications and 3 additions to be performed on $P_1$, while the problem requires 3 multiplications and 3 additions. As soon as the probability of a fault causing a failure is less than 1, the optimal scheduler does use processor $P_2$ from the beginning by initially scheduling $\mathit{task}_1$ on process $P_1$ and $\mathit{task}_2$ on processor $P_2$ (which is also optimal when no faults can occur). 

In the case of the expected energy consumption, the optimal scheduler uses both processes unless one has 2 or more faults than the other and there is only a small chance that a fault will cause a failure. For example, if $P_1$ has 3 faults and $P_2$ has 1 fault, then $P_1$ is only used by the optimal scheduler when the probability of a failure causing a fault is approximately $0.25$ or less.

\section{Conclusions}

We have introduced turn-based probabilistic timed games and shown that digital clocks are sufficient for analysing a large class of such games and performance properties. We have demonstrated the feasibility of the approach through two case studies. However, there are limitations of the method since, in particular, as for PTAs~\cite{KNPS06}, the digital clocks semantics does not preserve stopwatch properties or general (nested) temporal logic specifications.

We are investigating extending the approach to concurrent probabilistic timed games. However, since such games are not determined for expected reachability properties~\cite{FKNT16}, this is not straightforward. One direction is to find a class of %
games which are determined. If we are able to find such a class, then the extension of PRISM-games to concurrent stochastic games~\cite{KNPS18} could be used to verify this class. Work on finite-state concurrent stochastic games has recently been extended to the case when players have distinct objectives~\cite{KNPS19} and considering such objectives in the real-time case is also a direction of future research.

Another direction of future research is to  %
formulate a zone-based approach for verifying probabilistic timed games. For the case of probabilistic reachability, this appears possible through the approach of~\cite{KNP09c} for PTAs. However, it is less clear that the techniques for expected time~\cite{JKNQ17} and expected prices~\cite{KNP17b}, and temporal logic specifications~\cite{KNSW07} for PTAs, can be extended to TPTGs. Finally, we mention that, although the PRISM language models used in \sectref{case-sect} were built by hand, in future we plan to automate this procedure, extending the one already implemented in PRISM~\cite{KNP11} for PTAs.

\startpara{Acknowledgements} This work is partially supported by the EPSRC Programme Grant on Mobile Autonomy and the PRINCESS project, under the DARPA BRASS programme  (contract FA8750-16-C-0045).

\bibliographystyle{spmpsci}
\bibliography{bib}

\ifthenelse{\isundefined{\techreport}}{%
}{%
\appendix
\section{Proofs from \sectref{correct-sect}}\label{appendix}

In this appendix we include the details omitted from \sectref{correct-sect} as they closely follow those for PTAs presented in~\cite{KNPS06}. As in \sectref{correct-sect}, we fix a TPTG $\ptg$, coalition of players $C$ and target set of locations $F \in \loc$, and let $F_\Tset = \{ (l,v) \in F {\times} \Tset^\clocks \mid v \sat \inv(l) \}$ be the target of game $\tptg$ for $\Tset \in \{ \Rset,\Nset\}$. 

We first prove \propref{strat2-prop}, i.e.\ that for any strategy profile of $\nptgc$ there is a corresponding strategy profile of $\rptgc$.
\begin{proof}[of \propref{strat2-prop}]
Consider any strategy profile $\sigma'{=}(\sigma_1',\sigma_2')$ of $\nptgc$. We can construction the strategy profile $\sigma{=}(\sigma_1,\sigma_2)$ of $\rptgc$ where the strategies $\sigma_1$ and $\sigma_2$ make the same choices as those of $\sigma_1'$ and $\sigma_2'$ respectively. The only difference is in the states reached as values of a clocks in $\rptg$ are not bounded. It then follows that $\Pset^{\sigma}(F_\Rset) = \Pset^{\sigma'}(F_\Nset)$ and $\Eset^{\sigma}(F_\Rset) = \Eset^{\sigma'}(F_\Nset)$ as required. \qed
\end{proof}
We now presenting definitions and results for PTAs~\cite{NPS13} and TAs~\cite{Hen91,HMP92} used in proving the correctness of the digital clocks semantics.
\begin{definition}\label{digi-def}
For any $t \in \Rset$ and $\varepsilon \in [0,1]$ let:
\[
\digi{t}{\varepsilon} =
\left\{ \begin{array}{cl}
\lfloor t \rfloor & \mbox{if $t \leq \lfloor t \rfloor + \varepsilon$} \\ 
\lceil t \rceil & \mbox{otherwise} .
\end{array}
\right.
\]
\end{definition}
\begin{lemma}\label{digi-lem}
For any $t,t' \in \Rset$, $c \in \Nset$ and $\sim \in \{ \leq , = , \geq \}$,
if $t - t' \sim c$ then $\digi{t}{\varepsilon} - \digi{t'}{\varepsilon} \sim c$ for all $\varepsilon \in [0,1]$.
\end{lemma}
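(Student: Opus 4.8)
The plan is to reduce all three relational cases to a single observation about the \emph{digitization error} $\digi{t}{\varepsilon}-t$. First I would establish that, for every $t \in \Rset$ and every $\varepsilon \in [0,1]$, this error lies in the half-open interval of length one
\[
\digi{t}{\varepsilon} - t \; \in \; [-\varepsilon,\, 1-\varepsilon) \, .
\]
This is a direct case split on \defref{digi-def}: writing $\{t\} = t - \lfloor t \rfloor$ for the fractional part, if $\{t\} \leq \varepsilon$ then $\digi{t}{\varepsilon} = \lfloor t \rfloor$ and the error is $-\{t\} \in [-\varepsilon, 0]$, while if $\{t\} > \varepsilon$ (so $t$ is non-integral, and $\lceil t \rceil = \lfloor t \rfloor + 1$) then $\digi{t}{\varepsilon} = \lceil t \rceil$ and the error is $1 - \{t\} \in (0, 1-\varepsilon)$.

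Next I would write
\[
\digi{t}{\varepsilon} - \digi{t'}{\varepsilon} \; = \; (t - t') + \big( (\digi{t}{\varepsilon} - t) - (\digi{t'}{\varepsilon} - t') \big) \, ,
\]
and observe that, since both error terms lie in $[-\varepsilon, 1-\varepsilon)$, their difference lies in the \emph{open} interval $(-1, 1)$. The remaining argument then exploits that $\digi{t}{\varepsilon} - \digi{t'}{\varepsilon}$ is an integer (a difference of integers) and that $c \in \Nset$. For $\sim$ equal to $\leq$: from $t - t' \leq c$ we get $\digi{t}{\varepsilon} - \digi{t'}{\varepsilon} < c + 1$, and an integer strictly below $c+1$ is $\leq c$. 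The case $\sim$ equal to $\geq$ is symmetric, using $\digi{t}{\varepsilon} - \digi{t'}{\varepsilon} > c - 1$. For $\sim$ equal to $=$: from $t - t' = c$ the quantity $\digi{t}{\varepsilon} - \digi{t'}{\varepsilon} - c$ is the integer-valued error difference, yet it lies in $(-1,1)$, so it must equal $0$, giving the desired equality.

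The main point requiring care is the strictness of the endpoints of the error interval, and hence of the interval $(-1,1)$ for the error difference. The half-open shape $[-\varepsilon, 1-\varepsilon)$ — closed below, open above — is exactly what guarantees that the error difference is \emph{strictly} bounded by $\pm 1$; an off-by-one slip here (e.g.\ allowing the difference to reach $\pm 1$) would break both the $\leq/\geq$ rounding steps and the forcing of equality in the $=$ case. I would therefore pay particular attention to the integral-$t$ and $\{t\} = \varepsilon$ boundary situations when verifying the error bound, since these are precisely where the endpoints are attained and where the rounding rule of \defref{digi-def} is decided.
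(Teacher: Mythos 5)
The paper itself gives no proof of \lemref{digi-lem}: it is stated in the appendix as an imported fact from the digitization literature for (probabilistic) timed automata (\cite{Hen91,HMP92,NPS13} are cited immediately above it), so there is no in-paper argument to compare yours against, and a self-contained proof is a genuine addition. Your argument is the standard one and is essentially correct: the three relations all reduce to the facts that $\digi{t}{\varepsilon}-\digi{t'}{\varepsilon}$ is an integer and that the difference of the two digitization errors lies strictly in $(-1,1)$, after which ``integer $<c+1$ implies $\leq c$'', ``integer $>c-1$ implies $\geq c$'' and ``integer in $(-1,1)$ is $0$'' finish each case. One small bookkeeping slip, harmless but worth fixing since you rightly flag the endpoints as the delicate part: the summary interval $[-\varepsilon,1-\varepsilon)$ fails at $\varepsilon=1$, where the second branch of \defref{digi-def} is vacuous, every $t$ is rounded down, and for integral $t$ the error is $0\notin[-1,0)$; the correct interval there is $(-1,0]$ (the strict lower bound coming from $t-\lfloor t\rfloor<1$, not from the comparison with $\varepsilon$). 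What actually matters, and what survives in every case, is that for each fixed $\varepsilon$ all errors lie in a single interval of length one that is closed at one end and open at the other, which is exactly what forces the error difference into the open interval $(-1,1)$; with that one-line correction your proof is complete.
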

\begin{definition}
For any infinite path $s_0 \xrightarrow{t_0,a_0} s_1  \xrightarrow{t_1,a_1} \cdots$ of $\tpta$, the accumulated duration up to the $(n{+}1)$th state of $\pi$ is defined by $\dur{\pi,n} = \sum_{i=0}^{n-1} t_i$.
\end{definition}
\begin{lemma}\label{path-dig}
For any path $\pi= ( l_0 , v_0  ) \xrightarrow{t_0,a_0}  ( l_1 , v_1  )  \xrightarrow{t_1,a_1} \cdots$ of\/ $\rptgc$, 
$x \in \cX$ and $i \in \Nset$, there exists $j \leq i$ such that
$v_i(x) = \dur{\pi,i} - \dur{\pi,j}$.
\end{lemma}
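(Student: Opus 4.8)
The plan is to prove the claim by induction on $i$, exploiting the fact that, in the dense-time semantics of \defref{ptg-sem}, each transition either resets a clock to $0$ or advances it by exactly the elapsed time $t$. The guiding intuition is that $v_i(x)$ records the time that has passed since clock $x$ was last reset, or since the start of the path if it has never been reset; the index $j$ promised by the statement is precisely this last-reset point, so that $\dur{\pi,i}-\dur{\pi,j}$ is exactly the accumulated duration since that reset. Throughout I would use that every path of $\rptgc$ starts in $\sinit=(\linit,\mathbf{0})$, so $v_0=\mathbf{0}$.

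For the base case $i=0$ we have $v_0(x)=0$ for every $x\in\cX$, and taking $j=0\le 0$ gives $\dur{\pi,0}-\dur{\pi,0}=0=v_0(x)$. For the inductive step, assume the claim for $i$, so that $v_i(x)=\dur{\pi,i}-\dur{\pi,j}$ for some $j\le i$, and consider the $(i{+}1)$th transition $(l_i,v_i)\xrightarrow{t_i,a_i}(l_{i+1},v_{i+1})$. Since this transition occurs on a path it has positive probability, so by \defref{ptg-sem} there is a reset set $X_i\subseteq\cX$ in the support of $\probt(l_i,a_i)$ with $v_{i+1}=(v_i{+}t_i)[X_i{:=}0]$. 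I would then split on whether $x\in X_i$: if $x\in X_i$ then $v_{i+1}(x)=0$, and choosing $j'=i{+}1$ yields $\dur{\pi,i{+}1}-\dur{\pi,i{+}1}=0$; if $x\notin X_i$ then $v_{i+1}(x)=v_i(x){+}t_i$, and since $\dur{\pi,i{+}1}=\dur{\pi,i}{+}t_i$ by definition, the same index $j$ works, as $v_{i+1}(x)=(\dur{\pi,i}-\dur{\pi,j})+t_i=\dur{\pi,i{+}1}-\dur{\pi,j}$ with $j\le i<i{+}1$. As $i$ was arbitrary, this closes the induction.

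The argument is essentially routine, and the only point that requires care is the extraction of the reset set $X_i$ from a path step. Because the semantics collapses several reset sets into a single successor valuation through the sum $\mu(l',v')=\sum_{X}\probt(l,a)(X,l')$, one must note that a positive-probability step guarantees at least one such $X_i$ realising $v_{i+1}=(v_i{+}t_i)[X_i{:=}0]$; fixing any one of these suffices, since the clock value $v_{i+1}(x)$ is determined solely by whether $x$ belongs to the chosen $X_i$. This is the main (and mild) obstacle; everything else is a direct unfolding of the definitions of $\dur{\pi,n}$ and of the transition function.
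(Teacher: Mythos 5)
Your proof is correct: the paper states this lemma without proof (it is one of the results imported from the timed-automata literature), and your induction on $i$ with the case split on whether $x$ lies in a reset set realising the $(i{+}1)$th step is exactly the standard argument the paper implicitly relies on. The one reading you need, and correctly make explicit, is that paths of $\rptgc$ start in $(\linit,\mathbf{0})$, which is how the lemma is used in \defref{pathdig-def}.
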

\begin{definition}\label{pathdig-def}
For any (finite or infinite) path
$\pi = ( \bar{l} , \mathbf{0}  ) \xrightarrow{t_0,a_0}  ( l_1 , v_1  )  \xrightarrow{t_1,a_1} \cdots$
of\/ $\rptgc$, its $\varepsilon$-digitization is the path 
\[ \begin{array}{c}
\digi{\pi}{\varepsilon} = ( \bar{l} , \mathbf{0}  ) \xrightarrow{t_1',a_0} ( l_1 , \digi{v_1}{\varepsilon} ) \xrightarrow{t_1',a_1} \cdots
\end{array} \]
of\/ $\nptgc$ where for any $i \in \Nset$ and $x \in \cX:$
\begin{itemize}
\item
$\digi{v_i}{\varepsilon}(x) = \min(\digi{\dur{\pi,i}}{\varepsilon} - \digi{\dur{\pi,j}}{\varepsilon}, k_x+1)$
and $j \leq i$ such that $v_i(x) = \dur{\pi,i} - \dur{\pi,j}$ which exists by \lemref{path-dig};
\item
$t_i' =  \digi{\dur{\pi,i{+}1}}{\varepsilon} - \digi{\dur{\pi,i}}{\varepsilon}$.
\end{itemize}
\end{definition} 
The correctness of this construction is dependent on the clock constraints appearing in $\ptg$ being closed and diagonal-free (\assumref{pta-assum}). 

We next extend the notion of digitization to strategy profiles of $\rptgc$. Following \cite{KNPS06}, to achieve this we define the digitization of sets of paths, i.e.\ for a set of paths $\Pi$, we let $[\Pi]_\varepsilon \rmdef \{ [\pi]_\varepsilon \mid \pi \in \Pi \}$. When considering a strategy profile $\sigma$ of $\rptgc$, we assume the domain 
of the mapping $\digi{\cdot}{\varepsilon}$ is restricted to the sets of paths $\ipaths^{\sigma}$ and $\fpaths^{\sigma}$. Furthermore, for any set of finite paths $\Pi \subseteq \fpaths^{\sigma}$ let:
\[
\Prob^{\sigma}(\Pi) \; = \; \Prob^{\sigma}(\{ \pi' \in \ipaths^{\sigma} \mid \exists \pi\in \Pi . \, (\mbox{$\pi$ is a prefix of $\pi'$}) \})
\]
and for any finite path $\pi \in \fpaths^{\sigma}$, $1{\leq}i{\leq}2$ and $(t,a) \in A(\last(\pi))$ let:
\[ 
\pi \xrightarrow{t,a}  \; \rmdef \; \{ \pi' \in \fpaths^{\sigma} \mid \exists s \in S. \, ( \pi' = \pi \xrightarrow{t,a} s ) \} \, .
\]
\begin{definition}\label{estrat-def}
For any strategy profile $\sigma {=} (\sigma_1,\sigma_2)$ of $\rptgc$ and $\varepsilon \in [0,1]$ its $\varepsilon$-digitization strategy profile $\sigma^\varepsilon {=} (\sigma_1^\varepsilon,\sigma_2^\varepsilon)$ of $\nptgc$ is defined as follows. For any finite path $\pi$ of $\nptgc$ and $1{\leq}i{\leq}2$ such that $\last(\pi) \in S_i:$
\begin{itemize}
\item
if $\digi{\pi}{\varepsilon}^{-1}$ is non-empty, then for any $(t,a) \in A(\last(\pi))$, then the probability of $\sigma_i^\varepsilon$ choosing $(t,a)$ after $\pi$ has been performed is given by
\[
\sigma_i^\varepsilon(\pi)(t,a) \; = \;
\frac{\displaystyle \Prob^{\sigma}(\digi{\pi \xrightarrow{t,a}\;}{\varepsilon}^{-1})}
{\displaystyle \Prob^{\sigma}(\digi{\pi}{\varepsilon}^{-1})} 
\]
\item
if $\digi{\pi}{\varepsilon}^{-1}$ is the empty set, then let $\sigma_i^\varepsilon$ choose an arbitrary element of $A(\last(\pi))$.
\end{itemize}
\end{definition}
We next show how probability measures of strategy profiles change under $\varepsilon$-digitization.
\begin{proposition}\label{strat1-prop} 
For any strategy profile $\sigma$ of $\rptgc$ and $\varepsilon \in [0,1]$, we have
$\Prob^{\sigma^\varepsilon}(\Omega)=\Prob^{\sigma}(\digi{\Omega}{\varepsilon}^{-1})$
for all elements $\Omega$ of the $\sigma$-algebra of the probability measure $\Prob^{\sigma^\varepsilon}$.
\end{proposition}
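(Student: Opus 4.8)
The plan is to prove the identity first on basic cylinders and then lift it to the whole $\sigma$-algebra by uniqueness of measures. The map $\digi{\cdot}{\varepsilon}$ of \defref{pathdig-def} sends infinite paths of $\rptgc$ to infinite paths of $\nptg$, and it is measurable because the digitization of a path is determined prefix-by-prefix; hence $\Omega \mapsto \Prob^{\sigma}(\digi{\Omega}{\varepsilon}^{-1})$ is exactly the pushforward of $\Prob^{\sigma}$ along $\digi{\cdot}{\varepsilon}$ (preimage commutes with complement and countable union), and it is a probability measure on the cylinder-generated $\sigma$-algebra of $\ipaths(\nptg)$, which contains the $\sigma$-algebra of $\Prob^{\sigma^\varepsilon}$. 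Since the basic cylinders — the sets of infinite paths extending a fixed finite path $\pi'$ of $\nptgc$ — are closed under intersection and generate this $\sigma$-algebra, it suffices to show, for every finite path $\pi'$ of $\nptgc$,
\[
\Prob^{\sigma^\varepsilon}(\{\pi'\}) \; = \; \Prob^{\sigma}(\digi{\pi'}{\varepsilon}^{-1}),
\]
where on the left we use the paper's convention identifying a finite path with the measure of its cylinder, and on the right $\digi{\pi'}{\varepsilon}^{-1} \subseteq \fpaths^{\sigma}$ is the set of finite dense-time paths digitizing to $\pi'$.

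I would establish this by induction on $|\pi'|$. The base case $|\pi'|{=}0$ holds since both sides equal $1$: the only length-$0$ path is the initial state, whose digitization preimage is the initial dense-time state with full measure. For the inductive step, write $\pi' {=} \rho' \xrightarrow{t',a} s'$ with $\last(\rho'){=}(l,w) \in S_i$. Factorising the cylinder measure through the fixed profile gives
\[
\Prob^{\sigma^\varepsilon}(\{\pi'\}) \; = \; \Prob^{\sigma^\varepsilon}(\{\rho'\}) \cdot \sigma_i^\varepsilon(\rho')(t',a) \cdot \delta_{\nptg}((l,w),(t',a))(s').
\]
Substituting the induction hypothesis for $\Prob^{\sigma^\varepsilon}(\{\rho'\})$ and the definition of $\sigma_i^\varepsilon$ from \defref{estrat-def} (whose denominator $\Prob^{\sigma}(\digi{\rho'}{\varepsilon}^{-1})$ cancels the hypothesis factor) reduces the goal to the single identity
\[
\Prob^{\sigma}(\digi{\rho' \xrightarrow{t',a}\;}{\varepsilon}^{-1}) \cdot \delta_{\nptg}((l,w),(t',a))(s') \; = \; \Prob^{\sigma}(\digi{\pi'}{\varepsilon}^{-1}).
\]
The degenerate case $\Prob^{\sigma}(\digi{\rho'}{\varepsilon}^{-1}){=}0$ is treated separately: the hypothesis then forces $\Prob^{\sigma^\varepsilon}(\{\rho'\}){=}0$, and since every path of $\digi{\pi'}{\varepsilon}^{-1}$ extends a path of $\digi{\rho'}{\varepsilon}^{-1}$, both sides vanish.

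The main obstacle is this last identity, which expresses that $\varepsilon$-digitization is compatible with the probabilistic branching. The key observation is that $\digi{\cdot}{\varepsilon}$ alters only the timing of a path (the durations and the induced clock valuations), leaving the actions, locations and reset sets $X$ untouched; moreover, by \lemref{digi-lem} and the construction of \defref{pathdig-def}, the digitized valuation depends only on the digitized accumulated durations and commutes with clock resets and with the capping at $k_x{+}1$ — this is precisely where \assumref{pta-assum}(b) (closed, diagonal-free constraints) guarantees that digitized paths are genuine paths of $\nptgc$. Consequently, each dense-time one-step extension in $\digi{\rho' \xrightarrow{t',a}}{\varepsilon}^{-1}$ realises a discrete branch $(X,l')$ with probability $\probt(l,a)(X,l')$, and the digital state it reaches is determined purely by $l,w,t'$ and $X$. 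Hence the sets $\digi{\pi'}{\varepsilon}^{-1}$, as $s'{=}(l',w')$ ranges over the digital successors, partition $\digi{\rho' \xrightarrow{t',a}}{\varepsilon}^{-1}$, and the $\Prob^{\sigma}$-conditional mass of the block reaching $s'$ equals
\[
\sum\nolimits_{X \subseteq \cX \,:\, (w{+}t')[X{:=}0] = w'} \probt(l,a)(X,l') \; = \; \delta_{\nptg}((l,w),(t',a))(s'),
\]
matching \defref{ptg-sem} with $\Tset{=}\Nset$. The crux is the rigorous verification that this conditional digital transition probability is constant across the entire (uncountable) fibre of dense-time paths over $\rho' \xrightarrow{t',a}$, so that it factors out of $\Prob^{\sigma}$; once this representative-independence is in place the identity, and hence the induction, closes. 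The two-player setting contributes only the per-player bookkeeping of \defref{estrat-def}, so the argument otherwise parallels the PTA case of~\cite{KNPS06}.
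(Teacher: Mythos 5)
Your proposal is correct and follows essentially the same route as the paper: reduce to cylinder sets, then induct on path length, using \defref{estrat-def} to cancel the denominator against the induction hypothesis and reducing everything to the compatibility of $\varepsilon$-digitization with the one-step transition probabilities. You in fact supply more detail than the paper at two points it passes over silently, namely the degenerate case $\Prob^{\sigma}(\digi{\rho'}{\varepsilon}^{-1})=0$ and the representative-independence of the conditional digital branching across the fibre, which the paper dispatches with ``by definition of $\Prob^{\sigma}$''.
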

\begin{proof}
Consider any strategy profile $\sigma{=}(\sigma_1,\sigma_2)$ of $\rptgc$ and $\varepsilon \in [0,1]$. Using \defref{estrat-def} we can build the strategy profile $\sigma^\varepsilon{=}(\sigma^\varepsilon_1,\sigma^\varepsilon_2)$ of $\nptgc$.
Now, from the construction of $\Prob^{\sigma^\varepsilon}$, see~\cite{KSK76}, it is sufficient to show that:
\begin{equation}\label{digi1-eqn}
\Prob^{\sigma^\varepsilon}(\pi) = \Prob^{\sigma} (\digi{\pi}{\varepsilon}^{-1}) \; \; \mbox{for all $\pi \in \fpaths^{\sigma^\varepsilon}$}
\end{equation}
which we prove by induction on the length of $\pi$. Therefore, consider any path $\pi \in \ipaths^{\sigma^\varepsilon}$. If $|\pi|{=}0$, then $\pi{=}\sinit$ and
$\Prob^{\sigma^\varepsilon}(\pi)=1=\Prob^{\sigma}(\digi{\pi}{\varepsilon}^{-1})$ as required.

Next, suppose $|\pi|{=}n{+}1$ and by induction the lemma holds for all paths of length $n$. Now, $\pi$ is of the form $\pi' \xrightarrow{t,a} s'$ for some path $\pi'$
of length $n$, $(t,a) \in A(\last(\pi'))$ and $s' \in S$. Now $\last(\pi') \in S_i$ for some $1{\leq}i{\leq}2$, and therefore by construction of $\Prob^{\sigma^\varepsilon}$ supposing $\delta(\last(\pi'),(t,a))=\mu$:
\begin{align*}
\lefteqn{\Prob^{\sigma^\varepsilon}(\pi) \; = \; \Prob^{\sigma^\varepsilon}(\pi') \cdot \sigma_i^{\varepsilon}(\pi')(a) \cdot \mu(s')}  \\
& = \; \Prob^{\sigma} (\digi{\pi'}{\varepsilon}^{-1}) \cdot \sigma_i^{\varepsilon}(\pi')(t,a) \cdot \mu(s') &  \mbox{by induction} \\
& = \; \Prob^{\sigma}(\digi{\pi'}{\varepsilon}^{-1}) \cdot 
\frac{\displaystyle \Prob^{\sigma}(\digi{\pi' \xrightarrow{t,a}\;}{\varepsilon}^{-1})}{\displaystyle \Prob^{\sigma}(\digi{\pi'}{\varepsilon}^{-1})} \cdot \mu(s') & \qquad \mbox{by \defref{estrat-def}} \\
& = \; \Prob^{\sigma}(\digi{\pi' \xrightarrow{t,a}\;}{\varepsilon}^{-1}) \cdot \mu(s') & \mbox{rearranging} \\
& = \; \Prob^{\sigma} \{ \pi \in \fpaths^{\sigma_1,\sigma_2} \mid  \digi{\pi}{\varepsilon}= \pi' \xrightarrow{t,a} s' \}  & \!\! \mbox{by definition of $\Prob^{\sigma}$} \\ 
& = \; \Prob^{\sigma}(\digi{\pi}{\varepsilon}^{-1}) & \mbox{by construction of $\pi$} 
\end{align*}
and hence \eqnref{digi1-eqn} holds by induction. \qed
\end{proof}
We are now in a position to prove \thmref{reach-thm}, i.e.\  $\Pset_\rptg^C (F_{\Rset})  = \Pset_\nptg^C (F_{\Nset} )$.
\begin{proof}[of \thmref{reach-thm}] 
Consider any TPTG $\ptg$ satisfying \assumref{pta-assum}, coalition of players $C$ and set of locations $F \subseteq \loc$. From \defref{pathdig-def}, we have that:
\begin{equation}\label{path-eqn}
\mbox{$\pi(i) \in F_{\Rset}$ if and only if $\digi{\pi}{\varepsilon}(i) \in F_{\Nset}$ for any path $\pi$ of $\rptgc$ and $i \in \Nset$.}
\end{equation}
Now for any strategy profile $\sigma{=}(\sigma_1,\sigma_2)$ of $\rptgc$, by definition of $\Pset^\sigma(F_{\Rset})$:
\begin{align}
\Pset^{\sigma}(F_{\Rset}) & \; = \; \Prob^{\sigma} (\{ \pi \in \ipaths^\sigma \mid \pi(i) \in F \; \mbox{for some} \; i \in \Nset \}) \nonumber \\
& \; = \; \Prob^{\sigma'} ([\{ \pi \in \ipaths^\sigma \mid \pi(i) \in F \; \mbox{for some} \; i \in \Nset \}]^{-1}_\varepsilon) \label{prob-eqn}
\end{align}
for some strategy profile $\sigma'{=}(\sigma_1',\sigma_2')$ of $\nptgc$
using \propref{strat1-prop}. Combining \eqnref{prob-eqn} with \eqnref{path-eqn} it follows that:
\begin{align*}
\Pset^{\sigma}(F_{\Rset}) \; & \lefteqn{\; = \; \Prob^{\sigma} \{ \pi \in \ipaths^{\sigma'} \mid \pi(i) \in F \; \mbox{for some} \; i \in \Nset \}} \\
&= \; \Pset^{\sigma'}(F_{\Nset}) & \mbox{by definition of $\Pset^{\sigma'}(F_{\Nset})$.}
\end{align*}
Since the player 2 strategy $\sigma_2$ was arbitrary it follows that:
\[
\inf\nolimits_{\sigma_2 \in \Sigma^2_{\rptgc}} \Pset^{\sigma_1,\sigma_2}(F_{\Rset})
\; \geq \;
\inf\nolimits_{\sigma_2' \in \Sigma^2_{\nptgc}} \Pset^{\sigma_1',\sigma_2'}(F_{\Nset}) \, .
\]
Following dual arguments and using \propref{strat2-prop} we have:
\[
\inf\nolimits_{\sigma_2 \in \Sigma^2_{\rptgc}} \Pset^{\sigma_1,\sigma_2}(F_{\Rset})
\; \leq \;
\inf\nolimits_{\sigma_2' \in \Sigma^2_{\nptgc}} \Pset^{\sigma_1',\sigma_2'}(F_{\Nset})
\]
and therefore:
\[
\inf\nolimits_{\sigma_2 \in \Sigma^2_{\rptgc}} \Pset^{\sigma_1,\sigma_2}(F_{\Rset})
\; = \;
\inf\nolimits_{\sigma_2' \in \Sigma^2_{\nptgc}} \Pset^{\sigma_1',\sigma_2'}(F_{\Nset}) \,  .
\]
We can now apply similar arguments again to yield:
\[
\sup\nolimits_{\sigma_1 \in \Sigma^1_{\rptgc}} \inf\nolimits_{\sigma_2 \in \Sigma^2_{\rptgc}} \Pset^{\sigma_1,\sigma_2}(F_{\Rset})
\; = \;
\sup\nolimits_{\sigma_1' \in \Sigma^1_{\nptgc}} \inf\nolimits_{\sigma_2' \in \Sigma^2_{\nptgc}} \Pset^{\sigma_1',\sigma_2'}(F_{\Nset})
\]
as required. \qed
\end{proof}
To give the proof of \lemref{variable-lem}, we first require the following results from linear programming.
\begin{definition} 
A matrix $\mathbf{A}$ is\/ {\em totally unimodular} if each subdeterminant of $\mathbf{A}$ is $0$, $+1$ or $-1$.
\end{definition}
\begin{theorem}[\cite{Sch86} Theorem 19.3]\label{lp1-thm}
Let $\mathbf{A}$ be a matrix with entries $0$, $+1$ and $-1$. Then the following are equivalent:
\begin{enumerate}
\item $\mathbf{A}$ is totally unimodular;
\item
each collection of columns of $\mathbf{A}$ can be split into two parts so that the sum of the columns in one part minus the
sum of the columns in the other part is a vector with entries only $0$, $+1$ and $-1$.
\end{enumerate}
\end{theorem}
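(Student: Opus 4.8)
The plan is to prove the two implications separately, using the elementary observation that splitting a set of columns $J$ into two parts and subtracting one sum from the other is the same as choosing a sign $\epsilon_j \in \{+1,-1\}$ for each $j \in J$ and forming $\sum_{j \in J} \epsilon_j a_j$, where $a_j$ denotes the $j$th column of $\mathbf{A}$. In this language condition (2) says precisely that for every column subset $J$ there are signs making $\sum_{j \in J} \epsilon_j a_j$ lie in $\{-1,0,+1\}$ entrywise. This reformulation is visibly invariant under negating an individual column (just flip the corresponding sign), a fact I will use in the harder direction.

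For (1) $\Rightarrow$ (2), I would fix $J$, let $A_J$ be the corresponding column submatrix and $d = \sum_{j \in J} a_j$, and write the unknown signs as $\epsilon_j = 1 - 2x_j$ with $x_j \in \{0,1\}$, so that $\sum_{j \in J} \epsilon_j a_j = d - 2 A_J x$. As this combination is integral, requiring it to lie in $\{-1,0,+1\}$ entrywise is equivalent to $|d_i - 2(A_J x)_i| \le 1$, i.e.\ to the box-constrained system $\tfrac{d_i-1}{2} \le (A_J x)_i \le \tfrac{d_i+1}{2}$ together with $0 \le x \le 1$. The point $x = \tfrac12\mathbf{1}$ satisfies this relaxation, so the polytope is nonempty; and because $A_J$ inherits total unimodularity from $\mathbf{A}$ and all bounds are integral, the polytope has only integral vertices. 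Hence an integral $x \in \{0,1\}^J$ exists, and the induced signs give the required partition.

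For (2) $\Rightarrow$ (1), I would induct on the order $k$ of a square submatrix $B$ of $\mathbf{A}$, showing $\det B \in \{-1,0,+1\}$; the case $k=1$ is immediate, so assume $B$ is nonsingular and set $b := |\det B|$. By the inductive hypothesis every $(k{-}1)$-minor of $B$ lies in $\{-1,0,+1\}$, so by the adjugate formula the vector $z := b\,B^{-1}e_1$ has entries in $\{0,\pm1\}$ and satisfies $Bz = b\,e_1$. Applying the sign-form of the splitting property to the signed columns $\{z_j\,a_j : z_j \ne 0\}$ produces signs $s_j$; setting $u_j = s_j z_j$ on the support of $z$ and $0$ elsewhere gives $u \in \{0,\pm1\}^k$ supported exactly on $\operatorname{supp}(z)$, with every entry of $Bu$ in $\{-1,0,+1\}$.

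The crux is then a parity argument. For each row $i \ne 1$ the relation $Bz = b\,e_1$ forces the nonzero terms $z_j B_{ij}$ to cancel, so their number is even; consequently $(Bu)_i$ is a sum of an even number of $\pm1$'s, hence even, hence $0$. In row $1$ the number of nonzero terms has the parity of $b$, so $(Bu)_1 \equiv b \pmod 2$; thus $Bu = c\,e_1$ with $c \in \{-1,0,+1\}$. Since $B$ is nonsingular and $u \ne 0$ we get $c \ne 0$, so $c = \pm1$, whence $u = c\,B^{-1}e_1 = (c/b)\,z$, and comparing entries on the common support (all $\pm1$) forces $b = |c| = 1$. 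I expect this reverse direction to be the main obstacle, specifically aligning the splitting signs with the cancellation pattern of $z$ so that the parity argument pins each $(Bu)_i$ ($i\ne 1$) to $0$; the forward implication is essentially a repackaging of the standard integrality property of totally unimodular systems.
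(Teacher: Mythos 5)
First, a point of context: the paper offers no proof of this statement. It is quoted verbatim as Theorem~19.3 of Schrijver's book \cite{Sch86} (the Ghouila-Houri characterisation of total unimodularity, in column form) and is used purely as a black box in the proof of \lemref{variable-lem}, so there is no in-paper argument to compare yours against. Judged on its own terms, your proof is the standard one for this equivalence, and the harder direction $(2)\Rightarrow(1)$ is correct as written: the inductive hypothesis on $(k{-}1)$-minors makes the adjugate entries lie in $\{0,\pm1\}$, so $z = b\,B^{-1}e_1 \in \{0,\pm1\}^k$ with $Bz = b\,e_1$; applying the (sign-invariant) splitting property on $\operatorname{supp}(z)$ yields $u$ with $Bu \in \{0,\pm1\}^k$; and the parity bookkeeping (each row $i \ne 1$ meets $\operatorname{supp}(z)$ in an even number of nonzero entries, row $1$ in a number congruent to $b$ modulo $2$) correctly forces $Bu = c\,e_1$ with $c = \pm1$, whence $u = (c/b)z$ and comparing the $\pm1$ entries on the common support gives $b=1$.

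The one slip is in $(1)\Rightarrow(2)$: the bounds $(d_i-1)/2 \le (A_Jx)_i \le (d_i+1)/2$ are \emph{not} integral when $d_i$ is even, so the claim that ``all bounds are integral'' is false as stated and you cannot directly invoke integrality of the vertices of that polytope. The repair is standard and one line: replace the bounds by $\lfloor d_i/2\rfloor \le (A_Jx)_i \le \lceil d_i/2\rceil$. The point $x = \tfrac{1}{2}\mathbf{1}$ still satisfies the rounded system since $(A_J\tfrac{1}{2}\mathbf{1})_i = d_i/2$, the bounds are now integral, total unimodularity of $A_J$ gives an integral vertex $x \in \{0,1\}^J$, and for such $x$ the integer $d_i - 2(A_Jx)_i$ lies in $\{-1,0,+1\}$ exactly as you intend (it is forced to be $0$ when $d_i$ is even and $\pm1$ when $d_i$ is odd). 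With that correction the proof is complete.
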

\begin{theorem}[\cite{Sch86} Corollary 19.1.a]\label{lp2-thm}
Let $\mathbf{A}$ be a totally unimodular matrix, and let $\mathbf{b}$ and $\mathbf{c}$ be integral vectors. Then both problems in the linear programming duality equation:
\[
\max \{ \mathbf{c}\mathbf{x} \mid \mathbf{x} {\geq} 0 \wedge \mathbf{A}\mathbf{x} {\leq} \mathbf{b} \} = \min \{ \mathbf{y}\mathbf{b} \mid \mathbf{y} {\geq} 0 \wedge \mathbf{y}\mathbf{A} {\geq} \mathbf{c} \}
\]
have integral optimum solutions.
\end{theorem}
We also require a variant of \defref{EB-def} which considers only action prices.
\begin{definition}\label{EB2-def}
For $\Tset \in \{ \Rset,\Nset\}$, strategy profile $\sigma{=}(\sigma_1,\sigma_2)$ of $\tptgc$ and  
finite path $\pi$ of the profile let $\Aset^{\sigma_1,\sigma_2}_{0} (\pi,F_{\Tset})=0$ and for any $n \in \Nset$, if $\mathit{last}(\pi){=}(l,v) \in S_i$ for $1{\leq}i{\leq}2$, $\sigma_i(\pi){=}(t,a)$ and $\mu = P_\tptg((l,v),(t,a))$, then:
\begin{align*}
\Aset^{\sigma}_{n+1} (\pi,F_{\Tset}) \; = \;  \left\{ \begin{array}{cl}
0 & \;\mbox{if $(l,v) \in F_{\Tset}$} \\
\arew(l,a) + \sum\limits_{s' \in S} \mu(s') 
\cdot \Aset^{\sigma}_{n} (\pi \xrightarrow{t,a}s',F_{\Tset}) & \; \mbox{otherwise.}
\end{array} \right.
\end{align*}
\end{definition}
We are now in a position to prove \lemref{variable-lem}, i.e. for any strategy profile $\sigma$ of $\rptgc$
and $n \in \Nset$, there exist strategy profiles $\sigma^{\mathit{lb}}$ and $\sigma^{\mathit{ub}}$ of $\nptgc$ such that:
\[
\Eset^{\sigma^{\mathit{lb}}}_{n}(F_{\Nset}) 
\; \leq \; 
\Eset^{\sigma}_{n}(F_{\Rset})
\; \leq \;
\Eset^{\sigma^{\mathit{ub}}}_{n}(F_{\Nset}) \, .
\]
\begin{proof}[of \lemref{variable-lem}]
Consider any strategy profile $\sigma{=}(\sigma_1,\sigma_2)$ of $\rptgc$.
The first part of the proof involves constructing a set of constraints on the time steps of the strategy profiles that follow the same action choices as $\sigma$ up until the $n$th step. Using these constraints we then formulate a linear programming problem, whose objective is either to maximise or minimise the expected price of reaching a set of target states within $n$ steps. The result then follows from showing that there exist integer solutions which achieve the maximum and minimum values. Below, we consider only the construction of the strategy profile $\sigma^{\mathit{ub}}{=}(\sigma_1^{\mathit{ub}},\sigma_2^{\mathit{ub}})$ as the construction of the profile $\sigma_1^{\mathit{lb}}$ follows similarly.

We begin by constructing a set of linear constraints from which we can derive a set of strategy profiles that make the same choices as $\sigma_1$. More precisely, we consider any sequence of real values $\mathbf{t} =\langle t_{\pi} \rangle_{\pi \in \fpaths^{\sigma}}$
which satisfy, for any $\pi \in \fpaths^{\sigma}$, the following constraints:  
\begin{itemize}
\item
if $|\pi|=0$, then
\begin{subequations}
\begin{eqnarray}
t_\pi & \; \geq \;  & 0  \label{constraints1-eqn} \\
-t_\pi& \; \geq \;  & 0
\end{eqnarray}
\item if $|\pi|>0$, then 
\begin{align}
t_\pi - t_{\pi^{(k)}} & \; \geq \;   \lfloor \dur{\pi,|\pi|}  -  \dur{\pi,k} \rfloor & \mbox{for all $k <|\pi|$} \\
- t_\pi + t_{\pi^{(k)}} & \; \geq \;     - \lceil \dur{\pi,|\pi|}  +  \dur{\pi,k} \rceil & \mbox{for all $k <|\pi|$} \label{constraints4-eqn}
\end{align}
\end{subequations}
\end{itemize}
Sequences of values which satisfy these constraints include letting $t_\pi$ equal $\dur{\pi,|\pi|}$
or $\digi{\dur{\pi,|\pi|}}{\varepsilon}$ for any $\varepsilon \in [0,1]$. 

Now suppose that we fix a sequence of real values $\mathbf{t}$ which satisfy the above constraints. From these values we can construct the choices of a strategy profile $(\sigma_1^{\mathbf{t}},\sigma_2^{\mathbf{t}})$ which match the action choices of the profile $\sigma$, although the durations may differ.
The set of finite paths of the profile $(\sigma_1^{\mathbf{t}},\sigma_2^{\mathbf{t}})$ when starting from the initial state is given by $\{ [\pi]_{\mathbf{t}} \mid \pi \in \fpaths^{\sigma_1,\sigma_2} \}$, which we define inductively as follows:
\begin{itemize}
\item
if $\pi=( \linit, \mathbf{0} )$, then $[\pi]_{\mathbf{t}} = \pi$;
\item
if $\pi$ is of the form $\pi' \xrightarrow{t,a} ( l,v )$,
then:
\[
[\pi]_{\mathbf{t}} \; \rmdef \; [\pi']_{\mathbf{t}} \xrightarrow{t',a} ( l, v' )
\]
where $t' =  t_\pi - t_{\pi'}$ and for any clock $x$ we have $v'(x) = t_{\pi} - t_{\pi^{(j)}}$ for $j \leq |\pi|$ such that $v(x) = \dur{\pi,|\pi|} - \dur{\pi,j}$, which exists by \lemref{path-dig}.
\end{itemize}
The construction all finite paths of the profile $(\sigma_1^{\mathbf{t}},\sigma_2^{\mathbf{t}})$ yields also the choices made by the profile and by construction the action choices match those of the profile $\sigma$ although the durations may differ. The fact that these choices are valid choices of $\rptgc$ follows from \lemref{path-dig}, equations \eqnref{constraints1-eqn}--\eqnref{constraints4-eqn}
and since we restrict attention to closed, diagonal-free probabilistic timed games (\assumref{pta-assum}).

For any state $(l,v) \in S$, to simplify the presentation let $\lrew(l,v)=\lrew(l)$.
Now, from the construction of the choices of $(\sigma_1^{\mathbf{t}},\sigma_2^{\mathbf{t}})$ and \defref{EB-def} and \defref{EB2-def}, it follows that $\Eset^{\sigma_1^{\mathbf{t}},\sigma_2^{\mathbf{t}}}_{n}(F_{\Rset})$ equals:
\begin{eqnarray}
\lefteqn{\hspace*{-2cm} \sum_{\substack{\pi \in \fpaths^{\sigma} \wedge |\pi|\leq n \\ \wedge\forall i \leq |\pi| . \, \pi(i) \not\in F_{\Rset}}}  \!\!\!\! \!\!\!\!\!\!\!\!
\Prob^{\sigma_1^{\mathbf{t}},\sigma_2^{\mathbf{t}}}(\pi) \cdot (t_\pi {-} t_{\pi^{|\pi|{-}1}}) \cdot \lrew(\last(\pi)) 
+ \Aset^{\sigma_1^{\mathbf{t}},\sigma_2^{\mathbf{t}}}_n(F_{\Rset})} \nonumber\\ 
\quad\quad &\!\!\!\!\!\!\!\!\!=&  
\!\!\!\!\!\!\!\!\!\!\sum_{\substack{\pi \in \fpaths^{\sigma} \wedge |\pi|\leq n \\ \wedge\forall i \leq |\pi| . \, \pi(i) \not\in F_{\Rset}}}   \!\!\!\! \!\!\!\!\!\!\!\!
\Prob^{\sigma}(\pi) \cdot  (t_\pi {-} t_{\pi^{|\pi|{-}1}}) \cdot \lrew(\last(\pi))
+ \Aset^{\sigma}_n(F_{\Rset}) \label{obj-eqn}
\end{eqnarray}
since the action choices of $\sigma$ and $(\sigma_1^{\mathbf{t}},\sigma_2^{\mathbf{t}})$ are the same.

Now suppose we fix some
$n \in \Nset$ and consider the following linear programming problem over the variables $\langle t_\pi \rangle_{\pi \in \fpaths^{\sigma}_n}$, where
$\fpaths^{\sigma}_n$ is subset of paths of $\fpaths^{\sigma}$ with length at most $n$:
maximise \eqnref{obj-eqn} such that the constraints of \eqnref{constraints1-eqn}--\eqnref{constraints4-eqn} are satisfied. From \assumref{pta-assum} we have that all probabilities are rational, and therefore we can scale the objective function such that it contains only integer values. Furthermore, from the construction of the constraints the corresponding matrix is totally unimodular (using \thmref{lp1-thm} and the fact that, for any collection of columns of the constraint matrix, the sum of the columns is a vector with entries only $0$, $+1$ and $-1$). Therefore, using \thmref{lp2-thm}, it follows that the maximum solution is achieved by an integer vector.  More precisely, there exists a strategy profile $\sigma^{\mathit{ub}}{=}(\sigma_1^{\mathit{ub}},\sigma_2^{\mathit{ub}})$ of $\nptgc$ such that $\Eset^{\sigma}_{n}(F_{\Rset}) \leq \Eset^{\sigma^{\mathit{ub}}}_{n}(F_{\Nset})$
as required. \qed
\end{proof}

}

\end{document}